\documentclass[journal,twoside,web]{ieeecolor}
\usepackage{cite}
\usepackage{generic}
\usepackage{amsmath,amssymb,amsfonts}

\usepackage{amsthm}

\usepackage{algorithmic}
\usepackage{graphicx}
\usepackage{algorithm}
\usepackage{hyperref}
\hypersetup{hidelinks}
\usepackage{textcomp}
\usepackage{subfigure}
\let\labelindent\relax
\usepackage{enumitem}
\usepackage{mathrsfs}

\def\BibTeX{{\rm B\kern-.05em{\sc i\kern-.025em b}\kern-.08em
    T\kern-.1667em\lower.7ex\hbox{E}\kern-.125emX}}
\newtheorem{lemma}{Lemma}
\newtheorem{theorem}{Theorem}

\newtheorem{assumption}{Assumption}

\newcommand{\di}{D_i}
\newcommand{\xdi}{x_{\di}}
\newcommand{\ydi}{y_{\di}}
\newcommand{\vdi}{v_{\di}}
\newcommand{\vdib}{\bar v_{\di}}
\newcommand{\tdi}{\theta_{\di}}
\newcommand{\aj}{A_j}
\newcommand{\xaj}{x_{\aj}}
\newcommand{\yaj}{y_{\aj}}
\newcommand{\vaj}{v_{\aj}}
\newcommand{\vajb}{\bar v_{\aj}}
\newcommand{\taj}{\theta_{\aj}}

\begin{document}
\title{Multiplayer Reach-Avoid Differential Games with Defender-Side Information Delay}
\author{Zehua Zhao, Rui Yan, Jianping He, Xiaoming Duan
\thanks{Zehua Zhao, Jianping He, and Xiaoming Duan are with the State Key Laboratory of Submarine Geoscience, School of Automation and Intelligent Sensing, Shanghai Jiao Tong University, Shanghai 200240, China (e-mails: \{zehua.zhao, jphe, xduan\}@sjtu.edu.cn).}
\thanks{Rui Yan is with the School of Artificial Intelligence, Beihang University, Beijing 100191, China (e-mail: rui\_yan@buaa.edu.cn).}
}

\maketitle

\begin{abstract}
We consider a class of pursuit-evasion games in which multiple defenders and attackers move in the plane with bounded speeds, while each defender observes the states of other agents with a constant time delay. For the one-attacker-one-defender case, we derive an explicit analytical characterization of the attacker’s delayed attack region and prove its convexity under mild assumptions. When the defender can guarantee capture, we formulate a convex optimization problem to compute the capture point and derive optimal strategies for both players. These strategies are shown to constitute a subgame-perfect Nash equilibrium by exploiting the sequential structure induced by the information delay. The analysis is further extended to the one-attacker-multiple-defender scenario and to the general multiplayer setting. In the latter case, delay-aware pairwise winning relations are incorporated into a maximum matching formulation to address the defender-attacker assignment. Numerical simulations for one-on-one, one-vs-multiple, and multi-agent cases validate the theoretical results and illustrate the impact of information delay on game outcomes and optimal strategies.
\end{abstract}

\begin{IEEEkeywords}
reach-avoid differential games, information delay,  convex analysis, subgame-perfect Nash equilibrium, defender-attacker assignment.
\end{IEEEkeywords}

\section{Introduction}
Differential games characterize adversarial dynamic processes in continuous time and enable derivation of optimal strategies via the Hamilton-Jacobi-Isaacs (HJI) equation~\cite{IR-99}, providing mathematically interpretable solutions and structured strategies. Reach-avoid games, first proposed by Isaacs in~\cite{IR-99}, involve players with conflicting objectives  engaged in attack-defense interactions centered on a target region. Due to their wide range of applications in aerospace, maritime, and military domains, they have been extensively studied~\cite{xu2026collision,bui2025reach}.

Within the pioneering framework established by Isaacs, solving a differential game fundamentally relies on solving the HJI equation, a nonlinear partial differential equation that characterizes the value function of the game. However, for complex and high-dimensional problems, obtaining explicit solutions to the HJI equation is often extremely challenging. To overcome this difficulty, subsequent research has proposed various alternative approaches for addressing differential games, including analytical methods based on the Pontryagin Maximum Principle~\cite{pontryagin1966theory} and related optimal control theory. In recent years, one approach to solving such differential equations is geometric in nature and has attracted increasing attention due to its structural intuition and analytical simplicity~\cite{yan2017escape,garcia2020multiple,fu2020guarding,ramana2017pursuit,yan2024multiplayer,yan2019task}. This class of methods typically begins by constructing the barrier surface of the game, which partitions the state space into regions reflecting the relative advantage of each player. Candidate strategies are then designed based on this partition, and their optimality is established by verifying that the constructed value function satisfies the HJI equation~\cite{garcia2017geometric,garcia2018design,G-20,li2023optimal,li2026analytical}. Although directly solving the HJI equation remains difficult, verifying whether a constructed value function satisfies the HJI condition is comparatively more tractable. As a result, this verification-based analytical framework has become a common method for solving certain classes of differential games with relatively simple models.

Benefiting from the development of analytical methods, research on reach-avoid games has advanced rapidly in recent years~\cite{von2020multiple,lee2022two,liang2026pursuit,wang2025optimal,lee2024solutions,yan2023homicidal}. Both the modeling frameworks and the environmental assumptions have become increasingly sophisticated, enabling these games to better capture the characteristics of real-world physical systems and application scenarios. For example, the classical reach-avoid game formulated in a two-dimensional plane has been extended to three-dimensional spaces to better represent the motion of aerial vehicles and other agents operating in realistic environments~\cite{yan2019construction}. In addition, capture radii have been introduced for pursuers to reflect practical attack ranges, thereby making the interaction model more consistent with real operational constraints~\cite{yan2022matching}. Another important line of development concerns the number of players in the game. The original one-on-one formulation has been generalized to multi-agent scenarios involving multiple attackers and multiple defenders, which allows the framework to model cooperative and competitive behaviors in swarm or team-based systems~\cite{yan2018reach}. At the same time, the dynamical models of the agents have also evolved: instead of simple first-order integrator dynamics, many studies now adopt second-order integrator models with damping, which better approximate the dynamics of real physical platforms such as unmanned aerial vehicles or robotic systems~\cite{lyu2025reach}. Furthermore, researchers have incorporated environmental constraints to enhance realism. Obstacles are often introduced into the state space to capture the need for collision avoidance and safe navigation in practical environments~\cite{yan2024pursuit}. In addition, some studies place the entire reach-avoid game within fluid environments with constant flow fields, allowing the framework to represent situations such as aerial or underwater vehicles operating under environmental disturbances like wind or ocean currents~\cite{deng2023multiple}. Through these developments, the reach-avoid game framework has gradually evolved from an abstract theoretical model into a powerful analytical tool capable of describing increasingly realistic multi-agent adversarial scenarios.

In practical applications, especially in military contexts, information delays commonly arise during adversarial engagements. Therefore, incorporating information delays into reach–avoid games is both meaningful and practically relevant. In~\cite{li2026geometric}, Li et al. studied a multi-defender reach–avoid game in which the attacker is subject to information delay. For instance, a defender’s system may operate under standardized communication protocols that prescribe data acquisition and transmission rates, buffer capacities, and processing procedures, thereby inducing an inherent minimum observation delay~\cite{mi2021optimal, deka2021towards}. In addition, the clock frequency and interrupt-handling latency of the defender’s hardware are typically fixed. Under such conditions, real-time measurement of the defender’s latency by the attacker may not be necessary. Instead, knowledge of the standardized communication hardware, software, or off-the-shelf components employed by the defender is sufficient to infer the processing speed and intrinsic delay~\cite{winderix2021compiler, cheshmikhani2025interrupt}. Alternatively, these delay parameters may be estimated through prior reconnaissance, intelligence gathering, or pre-infiltration activities~\cite{hutchins2011intelligence}. In short-range dynamic confrontations, information delay is often unavoidable and must be addressed at the system design level~\cite{dennehy2011summary}. The concept of a defender operating under information delay was first introduced by Isaacs in~\cite{IR-99}. Under this premise, the manner in which the attacker exploits informational advantages to improve its outcome, and how the defender mitigates performance degradation despite known informational disadvantages, become central issues. The characterization of equilibrium strategies under such asymmetric timing conditions is therefore of fundamental importance. To date, however, no existing work has provided optimal strategies for the reach-avoid game with information delay. The main contributions of this paper are summarized as follows:

\begin{enumerate}[label=(\arabic*)]
    \item We propose a mathematical formulation of the reach-avoid game in which the defenders have delayed information about the attackers' positions, i.e., the defenders only have access to attackers' past states rather than their current positions.
    \item Under this formulation, we derive the strategies for both players in the one-attacker-one-defender case and prove that the proposed strategies are optimal in the sense of subgame-perfect Nash equilibrium.
    \item We extend the framework to the one-attacker-multiple-defender case and further to the multi-attacker-multi-defender scenario, and provide the corresponding strategies for each setting.
\end{enumerate}

The rest of this article is organized as follows. Section~\ref{PF} presents the mathematical formulation of the problem. Section~\ref{1vs1} derives the strategies for the one-attacker-one-defender case and proves that the proposed strategies constitute a subgame-perfect Nash equilibrium. Section~\ref{multi} extends the framework to the one-attacker-multiple-defender case as well as the multi-attacker-multi-defender scenario, and provides the corresponding strategies for each setting. Section~\ref{simu} presents numerical simulation results for all the aforementioned cases. Section~\ref{cc} concludes this article.

\section{Problem Formulation}\label{PF}

\begin{figure}[!t]
\centerline{\includegraphics[width=\columnwidth]{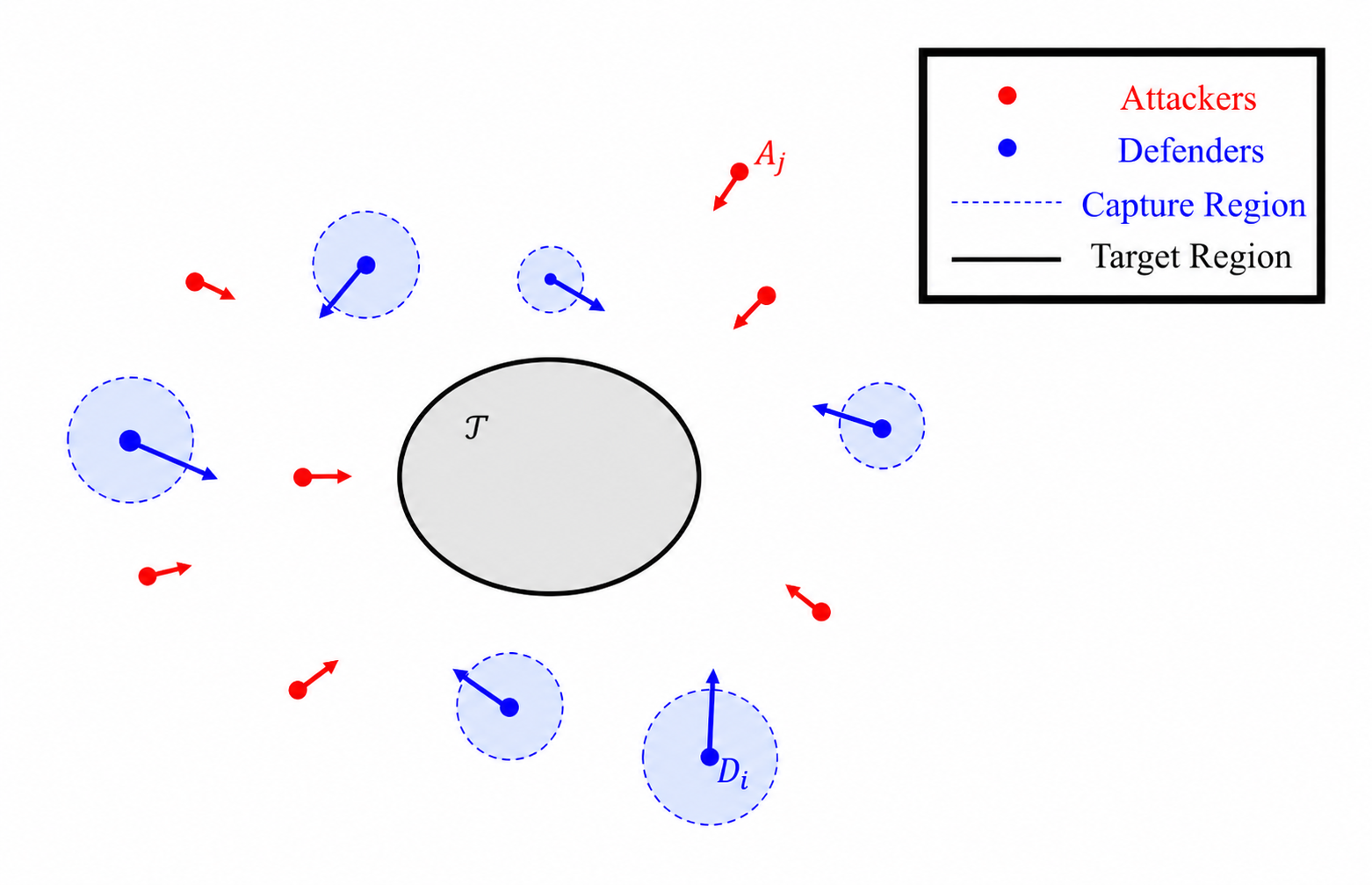}}
\caption{Schematic illustration of the considered scenario. The red dots represent the attackers, and the red arrows indicate the attackers' velocity. The blue dots represent the defenders, and the blue arrows indicate the defenders' velocity. The blue dashed circles denote the defenders' capture regions, and the region enclosed by the black line represents the target region.}
\label{fig2_1}
\end{figure}

We consider $N$ defenders $\mathscr D=\{D_1,D_2,\dots,D_{N}\}$ and $M$ attackers $\mathscr A=\{A_1,A_2,\dots,A_{M}\}$ driven by single integrators on a two-dimensional plane, as shown in Fig.~\ref{fig2_1}, and their dynamics are given by
\begin{equation*}
    \begin{aligned}
        \dot x_{\di}(t)=\vdi(t)\cos\tdi(t),\quad\dot y_{\di}(t)=&\vdi(t)\sin\tdi(t),\\&\quad i=1,2,\dots,N,
        \\\dot x_{\aj}(t)=\vaj(t)\cos\taj(t),\quad\dot y_{\aj}(t)=&\vaj(t)\sin\taj(t),\\&\quad j=1,2,\dots,M,
    \end{aligned}
\label{dyn}
\end{equation*}
where ${\mathbf x_{D_i}^t}=(\xdi^t,\ydi^t)^\top=(\xdi(t),\ydi(t))^\top$ and $\mathbf x_{A_j}^t=(\xaj^t,\yaj^t)^\top=(\xaj(t),\yaj(t))^\top$ are the positions of $\di$ and $\aj$ at time $t$, and ${\mathbf x_{D_i}^0}=(\xdi^0,\ydi^0)^\top=(\xdi(0),\ydi(0))^\top$ and $\mathbf x_{A_j}^0=(\xaj^0,\yaj^0)^\top=(\xaj(0),\yaj(0))^\top$ are the initial positions of $\di$ and $\aj$. We denote the system state at time $t$ by $\mathbf x^t=({\mathbf x_D^t}^\top,{\mathbf x_A^t}^\top)^\top=({\mathbf x_{D_1}^t}^\top,\dots,{\mathbf x_{D_N}^t}^\top,{\mathbf x_{A_1}^t}^\top,\dots,{\mathbf x_{A_M}^t}^\top)^\top$, where ${\mathbf x_{D_i}^t}$ and $\mathbf x_{A_j}^t$ are states of $D_i$ and $A_j$, respectively, and the initial state by $\mathbf x^0=({\mathbf x_D^0}^\top,{\mathbf x_A^0}^\top)^\top=({\mathbf x_{D_1}^0}^\top,\dots,{\mathbf x_{D_N}^0}^\top,{\mathbf x_{A_1}^0}^\top,\dots,{\mathbf x_{A_M}^0}^\top)^\top$. The control inputs are the magnitudes $\vdi$ and $\vaj$ and the directions $\tdi$ and $\taj$ of $\di$'s and $\aj$'s velocities respectively. The magnitudes of $\di$'s and $\aj$'s velocities are assumed to be bounded, i.e., $\vdi\in[0,\vdib]$, $\vaj\in[0,\vajb]$. There exists a capture radius $r_i$ for $\di$. Specifically, when the Euclidean distance between a given $\aj$ and $\di$ is less than $r_i$, $\aj$ is considered to be captured by $\di$.

Moreover, there exists a convex closed region in the two-dimensional plane, denoted by $\mathcal T\subset\mathbb R^2$, which serves as the target region of the game. Let
\begin{equation*}
    \mathcal T=\{\mathbf x\in\mathbb R^2\,|\,f(\mathbf x)\le0\},
\end{equation*}
where $f$ is a convex, continuous, and smooth function. The attackers aim to enter the target region $\mathcal T$; if they are captured before doing so, their objective is to minimize their minimum distance to the target region $\mathcal T$ over the entire course of the game. In contrast, the defenders seek to prevent the attackers from entering $\mathcal T$ and, upon capture, to keep them as far from $\mathcal T$ as possible. Accordingly, for attacker $\aj$, the game terminates when either it enters the target region $\mathcal T$, that is, $f(\mathbf x_{\aj}^t)\le0$, or it is captured by defender $\di$, that is, $\|\mathbf x_{\di}-\mathbf x_{\aj}\|\le r_i$. Therefore, the terminal time $t_f$ of the game is defined as the earliest time at which all attackers have either entered the target region or been captured by the defenders.

Unlike conventional reach-avoid games, the present game assumes that each defender $\di$ is associated with a constant information delay $\tau_i$. In particular, all information perceived by $\di$, except for its own state variables, is delayed by $\tau_i$. The value of $\tau_i$ is common knowledge among all attackers and defenders. For instance, at time $t>\tau_i$, defender $\di$ observes its own position as $\mathbf x_{\di}^t$, but the positions of $D_{i'}$ and $\aj$ are perceived  as $\mathbf x_{D_{i'}}^{t-\tau_i}$ and $\mathbf x_{\aj}^{t-\tau_i}$, respectively, with $i'\ne i$.

Assume that the initial positions and parameters of 
$\di$ and $\aj$ satisfy the following conditions. These conditions allow us to focus on the main cases of interest while avoiding certain technical complications.

\begin{assumption}{\rm(Initial Conditions and Parameter Assumptions)}
    The initial positions and parameters of $\di$ and $\aj$ satisfy the following conditions:
    \begin{enumerate}[label=(\arabic*)]
        \item \label{as1}During the time interval $[0, \tau_i]$, $\di$ remains stationary and does not take any active defensive action.
        \item \label{as2}$f(\mathbf x_{\aj}^0)>0$ for $j=1,2,\dots,M$;
        \item \label{as3}$\vdib>\vajb$ for $i=1,2,\dots,N$ and $j=1,2,\dots,M$;
        \item \label{as4}$\|\mathbf x_{\di}^0-\mathbf x_{\aj}^0\|>r_i+\vajb\tau_i$ for $i=1,2,\dots,N$ and $j=1,2,\dots,M$;
        \item \label{as5}$r_i>\vajb\tau_i$  for $i=1,2,\dots,N$ and $j=1,2,\dots,M$.
    \end{enumerate}
\label{as}
\end{assumption}
Due to the presence of the information delay $\tau_i$, defender $\di$ cannot access any information other than its own current state during the initial time interval $[0,\tau_i]$. Accordingly, in Assumption~\ref{as},~\ref{as1} states that $\di$ does not participate in the game during this period.~\ref{as2} ensures that all attackers are initially located outside the target region $\mathcal T$. Note that if $\vajb \geq \vdib$ and the capture radius of $\di$ satisfies $r_i=0$, then for certain target regions $\mathcal T$, such as those defined by $f(\mathbf x) = [0\ 1]\cdot\mathbf x$, $\aj$ can guarantee winning against $\di$. Therefore, many existing reach-avoid games impose the condition $\vdib>\vajb$. In the same spirit,~\ref{as3} requires that $\vdib>\vajb$ for $i=1,2,\dots,N$ and $j=1,2,\dots,M$. Since $\di$ does not participate in the game before $\tau_i$,~\ref{as4} guarantees that throughout the interval $[0, \tau_i]$, attacker $\aj$ will not enter the capture range of $\di$ regardless of its motion, thereby preventing $\aj$ from being captured before $\di$ joins the game. Moreover, due to the information delay $\tau_i$, the best $\di$ can do before capturing $\aj$ is to move to the position previously occupied by $\aj$, namely, to achieve $\mathbf x_{\di}^t=\mathbf x_{\aj}^{t-\tau_i}$ . If $\di$ still fails to capture $\aj$ at that moment, then it can only continue tracking the position of $\aj$ from $\tau_i$ time units earlier and will never be able to capture it, causing the game to continue indefinitely. Therefore,~\ref{as5} ensures that the game terminates in finite time, i.e., $t_f < \infty$.

\section{One-attacker-one-defender Reach-Avoid Game with Information Delay}\label{1vs1}
In this section, we will analyze a reach-avoid game with information delay involving one attacker and one defender, i.e., $N=M=1$. We present a method for determining whether the attacker can enter the target region before being captured by the defender, thereby characterizing the attacker’s winning condition. If the defender wins, then, in accordance with the objectives of both players, we derive the capture location, i.e., the coordinates of the capture point, together with the corresponding strategies of the attacker and the defender, and establish that these strategies form an equilibrium. Since $N = M = 1$, for notational simplicity, we omit the subscripts $i$ and $j$ in this section, and the cost function $J$ of this game is given by
\begin{equation}
    J=\min_{0<t\le t_f}\min_{\mathbf x\in\mathcal T}\|\mathbf x_A^t-\mathbf x\|.
    \label{eq:costfunction}
\end{equation}
Therefore, if $J=0$, the attacker $A$ wins; otherwise, $A$ aims to minimize $J$, whereas $D$ aims to maximize it.

\subsection{Analysis of Attack Region}\label{sub3a}
Based on existing studies on reach-avoid games~\cite{yan2023multiplayer}, a standard approach to analyzing such problems is to characterize the attack region $\mathcal R_A$, defined as the set of all points that $A$ can guarantee reaching before being captured by $D$, i.e.,
\begin{equation}
    \mathcal R_A=\{\mathbf x\in\mathbb R^2\,|\,\frac{\|\mathbf x_A-\mathbf x\|}{\bar v_A}\le\frac{\|\mathbf x_D-\mathbf x\|-r}{\bar v_D}\}.
    \label{eq:standardattackregion}
\end{equation}
If the attack region intersects with the target region, i.e., $\mathcal R_A\cap\mathcal T\ne\varnothing$, then $A$ wins; otherwise, $D$ wins. When $D$ can guarantee a win, the capture point $\mathbf x_f$, where $A$ will be captured by $D$, is given by the point in $\mathcal R_A$ closest to the target region, i.e.,
\begin{equation*}
    \mathbf x_f=\arg\min_{\mathbf x\in\mathcal R_A}\min_{\mathbf x_0\in\mathcal T}\|\mathbf x-\mathbf x_0\|.
\end{equation*}
In this way, the optimal strategies for the standard reach-avoid game can be obtained.
\begin{lemma}{\rm(Optimal Strategies for the Standard Reach-Avoid Game~\cite{yan2023multiplayer})}
    The optimal strategies of both players are to move toward the capture point at their respective maximum speeds, which constitute optimal strategies in the sense of a Nash equilibrium.\label{lm:priliminary}
    \label{lm:standardoptimalstrategies}
\end{lemma}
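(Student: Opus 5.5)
The plan is to verify the Nash equilibrium property directly by two unilateral-deviation arguments, using the geometry of $\mathcal R_A$ in \eqref{eq:standardattackregion} and the characterization of $\mathbf x_f$ as the point of $\mathcal R_A$ closest to $\mathcal T$. Throughout, I treat the delay-free setting, i.e., both players observe current states. Denote by $\Phi(\mathbf x)=\min_{\mathbf x_0\in\mathcal T}\|\mathbf x-\mathbf x_0\|$ the distance to the target, which is convex because $\mathcal T$ is convex.

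\textbf{Step 1: convexity and existence of the minimizer.} Since $\bar v_D>\bar v_A$, the inequality $\bar v_D\|\mathbf x_A-\mathbf x\|+\bar v_A r\le \bar v_A\|\mathbf x_D-\mathbf x\|$ defines a compact convex set, namely a Cartesian oval of Apollonius type. To see convexity, write it as $\|\mathbf x_A-\mathbf x\|-(\bar v_A/\bar v_D)\|\mathbf x_D-\mathbf x\|\le -\bar v_A r/\bar v_D$. I would verify convexity by the standard argument that this oval, with ratio less than one, is bounded by a convex curve, or simply cite \cite{yan2023multiplayer}. Consequently the convex function $\Phi$ attains its minimum on $\mathcal R_A$ at some $\mathbf x_f$, and the value is $J^*=\Phi(\mathbf x_f)$.

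\textbf{Step 2: the attacker cannot do better against the defender's strategy.} The key invariance is the following. If $D$ heads straight to $\mathbf x_f$ at speed $\bar v_D$ while $A$ moves arbitrarily, the attack region recomputed from the current positions at any later time is contained in the original $\mathcal R_A$. This follows from the triangle inequality. After time $s$, $A$ has moved at most $\bar v_A s$, and any point reachable by $A$ first must satisfy the original inequality, because $D$'s progress toward $\mathbf x_f$ is at full speed. I expect this step to be the main obstacle. Containment is immediate for the point $\mathbf x_f$ itself, but for other points $\mathbf x$ one needs $\|\mathbf x_D^s-\mathbf x\|\ge\|\mathbf x_D^0-\mathbf x\|-\bar v_D s$, which holds trivially. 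Combined with $\|\mathbf x_A^0-\mathbf x\|\le\|\mathbf x_A^s-\mathbf x\|+\bar v_A s$, this gives $\mathcal R_A^s\subseteq\mathcal R_A^0$. Hence $A$'s position stays in $\mathcal R_A^0$ until capture, which happens in finite time because the shrinking region eventually collapses. Therefore $J\ge\min_{\mathcal R_A}\Phi=J^*$.

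\textbf{Step 3: the defender cannot do better against the attacker's strategy.} When $A$ heads straight to $\mathbf x_f$ at full speed, it arrives at time $\|\mathbf x_A-\mathbf x_f\|/\bar v_A$. Since $\mathbf x_f\in\mathcal R_A$, no defender trajectory can bring $D$ within distance $r$ of $\mathbf x_f$ earlier. The same holds for every intermediate point on the segment: the segment lies in $\mathcal R_A$ by convexity, and along the segment the slack in the inequality only improves. Thus $A$ reaches $\mathbf x_f$ (or enters $\mathcal T$ earlier) regardless of $D$'s play. In the first case $J\le\Phi(\mathbf x_f)=J^*$; in the second, $J=0$.

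Combining Steps 2 and 3 gives the saddle-point inequalities $J(\gamma_A^*,\gamma_D)\le J^*\le J(\gamma_A,\gamma_D^*)$. Taking $\gamma_D=\gamma_D^*$ in the first and $\gamma_A=\gamma_A^*$ in the second shows $J(\gamma_A^*,\gamma_D^*)=J^*$. Hence the pair of strategies in which both players move toward $\mathbf x_f$ at maximum speed is a Nash equilibrium. When $\mathcal R_A\cap\mathcal T\neq\varnothing$, the same argument yields $J^*=0$ and the attacker wins.
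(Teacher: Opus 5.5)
\textbf{Gap in Step 2.} The paper gives no proof of this lemma; it imports it from \cite{yan2023multiplayer}. Your argument therefore has to stand on its own, and it breaks at Step 2, which carries the whole lower bound $J\ge J^*$. From $\mathbf x\in\mathcal R_A^s$ and $\|\mathbf x_A^0-\mathbf x\|\le\|\mathbf x_A^s-\mathbf x\|+\bar v_A s$, concluding $\mathbf x\in\mathcal R_A^0$ requires $\|\mathbf x_D^s-\mathbf x\|\le\|\mathbf x_D^0-\mathbf x\|-\bar v_D s$. That says $D$ has closed its distance to $\mathbf x$ at full speed. The inequality you invoke is the reverse, trivially true one, and it yields nothing. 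By the triangle inequality, the needed one holds only for $\mathbf x$ on the ray from $\mathbf x_D^0$ through $\mathbf x_D^s$, beyond $\mathbf x_D^s$. So $\mathcal R_A^s\subseteq\mathcal R_A^0$ is false in general.

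Moreover, with $D$ committed open-loop to the initially computed $\mathbf x_f$, the conclusion itself fails. Take $r=0$, $\bar v_D=2$, $\bar v_A=1$, $\mathbf x_D^0=(0,0)$, $\mathbf x_A^0=(3,0)$, $\mathcal T=\{\mathbf x: x\ge 10\}$. Then $\mathcal R_A^0$ is the disk of radius $2$ about $(4,0)$, $\mathbf x_f=(6,0)$ and $J^*=4$. Let $D$ run along the axis to $(6,0)$ and stop, while $A$ goes up to $(3,5)$ and then right along $y=5$. The players never meet, $\mathbf x_A^2=(3,2)\notin\mathcal R_A^0$, and $A$ enters $\mathcal T$, so $J=0<J^*$.

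Containment fails even infinitesimally under the feedback reading. At $s=0^+$, with $D$ moving toward $(6,0)$ and $A$ moving straight up, the boundary point $(3,\sqrt3)$ has $\partial_s h=2-\sqrt3>0$, so the region moves outward there. No set-containment statement can be rescued.

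\textbf{What is missing.} The equilibrium strategies are feedback strategies: the capture point is recomputed from the current state, as in \eqref{eq:optimalstrategyaftertau}. What is monotone is the value $V(s)=\min_{\mathbf x\in\mathcal R_A^s}\Phi(\mathbf x)$, not the region. Write $h_s(\mathbf x)=\|\mathbf x_D^s-\mathbf x\|-\frac{\bar v_D}{\bar v_A}\|\mathbf x_A^s-\mathbf x\|-r$ and let $\mathbf x_f^s$ be the current minimizer. If $D$ moves at full speed toward $\mathbf x_f^s$, then for every attacker control
\[
\partial_s h_s(\mathbf x_f^s)\le -\bar v_D+\frac{\bar v_D}{\bar v_A}\,\bar v_A=0 ,
\]
with the point $\mathbf x_f^s$ held frozen, because $\|\mathbf x_A^s-\mathbf x_f^s\|$ cannot decrease faster than $\bar v_A$. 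Let $\lambda\ge 0$ be the KKT multiplier of $h_s\ge 0$. The envelope theorem (or one-sided Danskin derivatives) then gives $\dot V=-\lambda\,\partial_s h_s(\mathbf x_f^s)\ge 0$. This is the value-function (HJI) type verification the literature uses for this lemma.

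Since $\mathbf x_A^s\in\mathcal R_A^s$ up to capture, $\Phi(\mathbf x_A^s)\ge V(s)\ge V(0)=J^*$, which gives $J\ge J^*$ directly. Your unproved ``collapse'' claim is then unnecessary.

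\textbf{Step 3.} This step is sound for the upper bound, with one correction. The slack along the segment does not ``improve''; it decreases to zero at $\mathbf x_f$. What you actually use is that every segment point lies in $\mathcal R_A^0$ by convexity, and the open segment lies in its interior, so no capture occurs before $\mathbf x_f$. Lemma~\ref{lm:ConvexityCriterion} with $b=-r$ supplies the convexity you wanted to cite in Step 1.
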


In this case, the optimal strategies of both players are to move toward the capture point at their respective maximum speeds. These strategies have been shown to constitute a Nash equilibrium. Motivated by this idea, we likewise seek to characterize the attack region $\mathcal R_A$ for the reach-avoid game with information delay in order to analyze the game.

Due to the presence of information delay, before characterizing $\mathcal R_A$, we must first determine which state variables govern the strategies of both players at time $t$. In standard reach-avoid games without information delay, the strategies of both players at time $t$ depend only on their current state variables $\mathbf x_D^t$ and $\mathbf x_A^t$. In the presence of information delay, however, the situation is different for both $D$ and $A$. For $D$, at time $t$, the only information available consists of its own current position and the position of $A$ at time $t-\tau$; earlier information has no effect on the current strategy, while later information is unavailable. Therefore, the control inputs $v_D(t)$ and $\theta_D(t)$ of $D$ at time $t$ depend only on $\mathbf x_D^t$ and $\mathbf x_A^{t-\tau}$. For $A$, at time $t$ (here we assume that $D$ has already entered the game; the case in which $D$ has not yet participated will be discussed in Section~\ref{sub3b}), in addition to observing the current positions of both players,  $A$ also knows that $D$ is subject to  an information delay. Hence, $A$ knows that the motion of $D$ over the next $\tau$ time units depends only on the positions of $A$ over the interval $[t-\tau, t]$ and the positions of $D$ over $[t, t+\tau]$, and is independent of $A$’s current strategy. Therefore, once the optimal strategies of both players are specified, $A$ can predict the position of $D$ at time $t+\tau$. Consequently, the control inputs $v_A(t)$ and $\theta_A(t)$ of $A$ at time $t$ depend only on $\mathbf x_A^t$ and $\mathbf x_D^{t+\tau}$.




\begin{figure}[!t]
\centerline{\includegraphics[width=0.6\columnwidth]{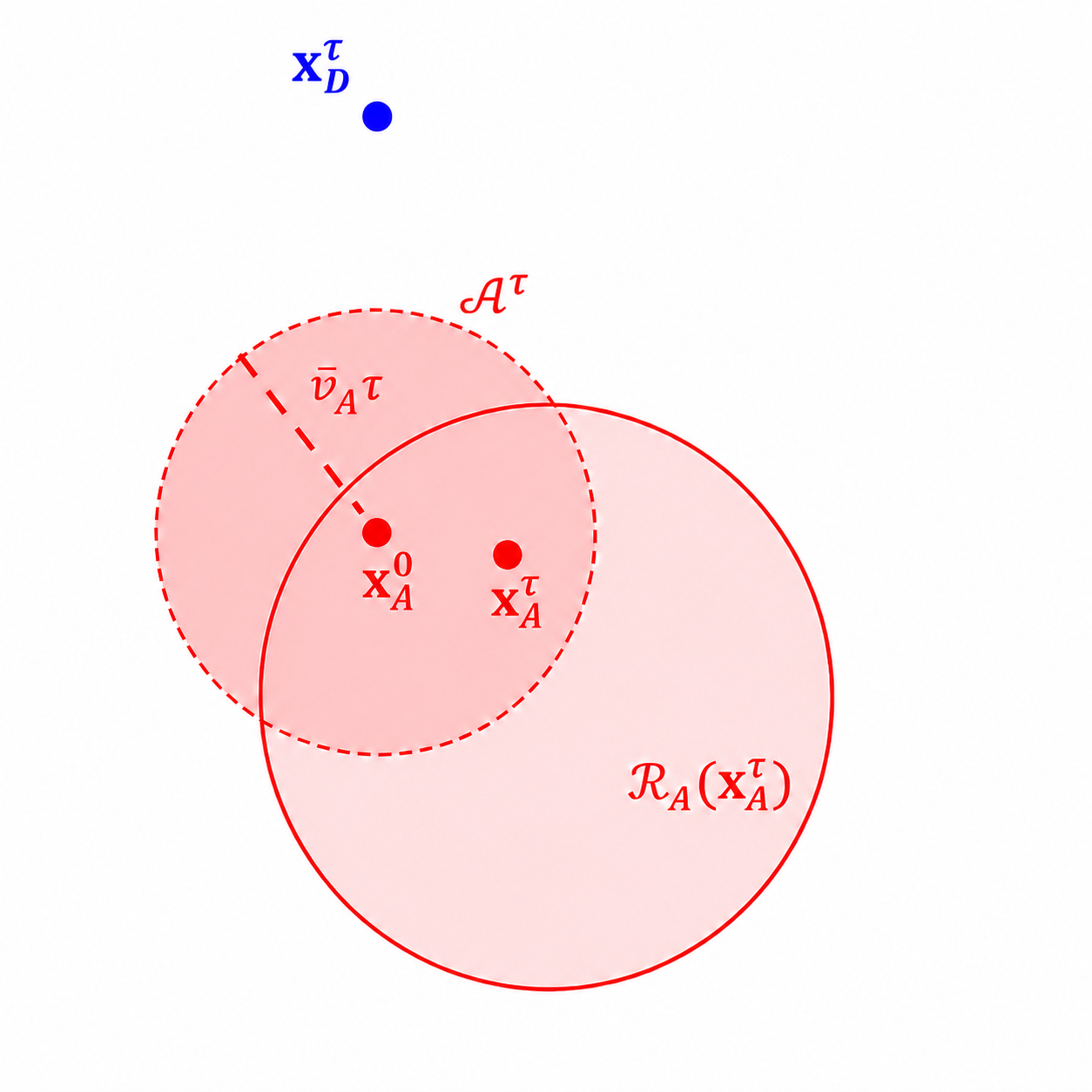}}
\caption{The set $\mathcal A^\tau$ of points that $A$ can reach at time $\tau$, and the set $\mathcal R_A(\mathbf x_A^\tau)$ of all points that $A$ can reach no later than $D$ when $A$ chooses $\mathbf x_A^\tau$ as its position at time $\tau$.}
\label{fig2_2a}
\end{figure}

\begin{figure}[!t]
\centerline{\includegraphics[width=0.6\columnwidth]{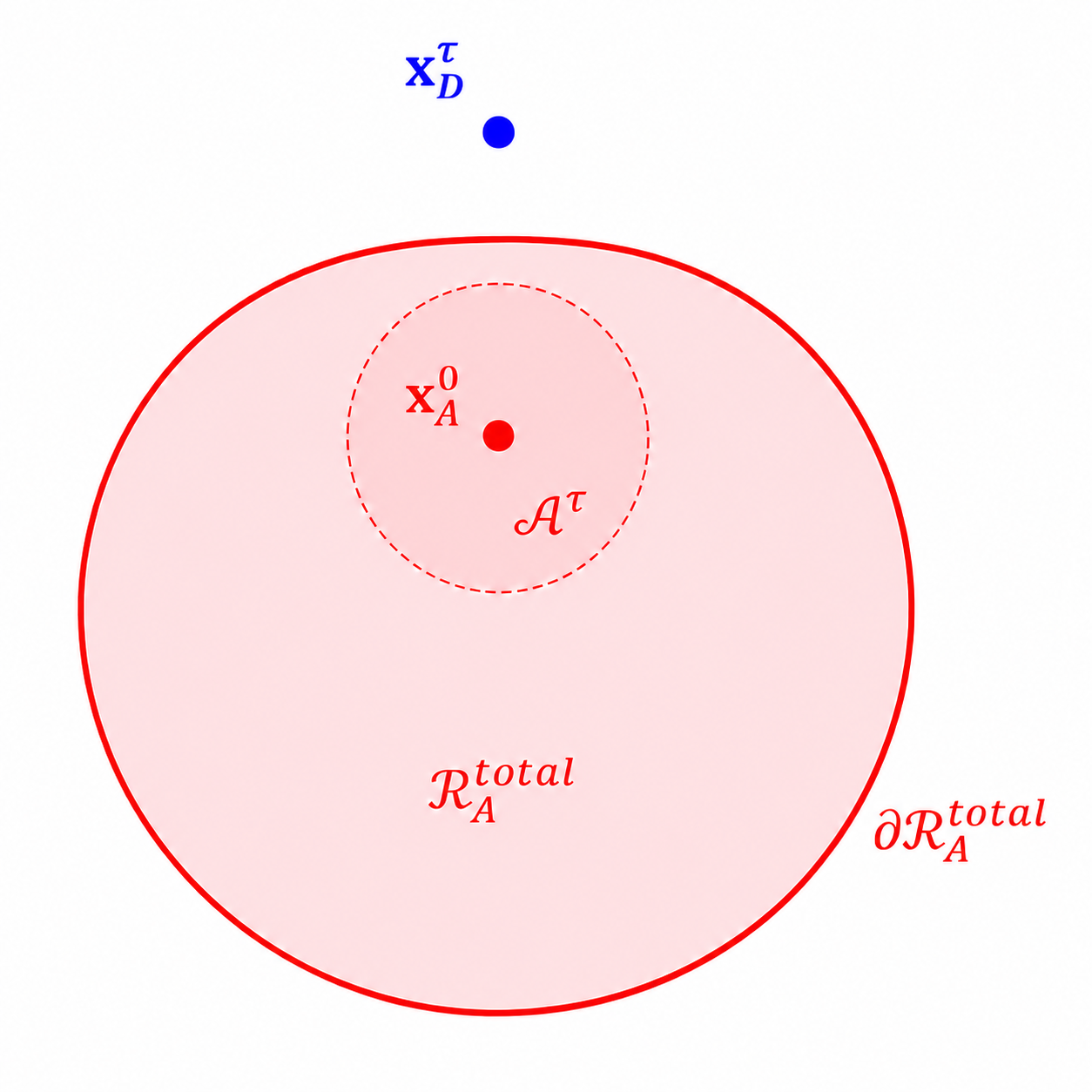}}
\caption{The attack region $\mathcal R_A^{total}$ and the boundary of attack region $\partial\mathcal R_A^{total}$.}
\label{fig2_2b}
\end{figure}

Next, we analyze the winning conditions and the optimal strategies for the reach-avoid game with information delay. Without loss of generality, we set the analysis time to be $t=0$, corresponding to the initial moment of the game, where $A$ enters the game at time $0$ and $D$ enters the game at time $\tau$. As in standard reach-avoid games, where the attack region must first be characterized, in the presence of information delay we also need to determine the set of all points that $A$ can guarantee reaching no later than $D$. To this end, we first consider the initial time interval $[0, \tau]$, during which $A$ can move freely over the entire two-dimensional plane. Let $\mathcal A^\tau$ denote the set of all points that $A$ can reach at time $\tau$. Since the maximum speed of $A$ is $\bar v_A$, it follows that $\mathcal A^\tau=\{\mathbf x\,|\,\|\mathbf x-\mathbf x_A^0\|\le\bar v_A\tau\}$, that is, $\mathcal A^\tau$ is a disk centered at $\mathbf x_A^0$ with radius $\bar v_A\tau$. Denote its boundary by $\partial\mathcal A^\tau$, i.e., $\partial\mathcal A^\tau=\{\mathbf x\,|\,\|\mathbf x-\mathbf x_A^0\|=\bar v_A\tau\}$. After time $\tau$, $A$ is located at some point $\mathbf x_A^\tau\in\mathcal A^\tau$. At that moment, $D$ enters the game, and $A$ and $D$ thereafter engage in a standard reach-avoid game.  By~\eqref{eq:standardattackregion}, when $A$ is located at $\mathbf x_A^\tau$, its reachable set is given by
\begin{equation}
    \mathcal R_A(\mathbf x_A^\tau)=\{\mathbf x\in\mathbb R^2\,|\,\frac{\|\mathbf x_A^\tau-\mathbf x\|}{\bar v_A}\le\frac{\|\mathbf x_D^\tau-\mathbf x\|-r}{\bar v_D}\},
    \label{eq:attackregionaftertau}
\end{equation}
where $\mathbf x_D^\tau=\mathbf x_D^0$, since $D$ does not participate in the game and remains stationary during the  interval $[0,\tau]$, as shown in Fig.~\ref{fig2_2a}. Because $\mathbf x_A^\tau$ may be any point in $\mathcal A^\tau$, and is determined solely by $A$’s strategy over the interval $[0, \tau]$, the union $\mathcal R_A^{total}$ of all such reachable sets $\mathcal R_A(\mathbf x_A^\tau)$,  as $\mathbf x_A^\tau$ ranges over $\mathcal A^\tau$, constitutes the attack region of the reach-avoid game with information delay. Therefore, the attack region is given by
\begin{equation*}
    \mathcal R_A^{total}=\bigcup_{\mathbf x_A^\tau\in\mathcal A^\tau}\mathcal R_A(\mathbf x_A^\tau).
\end{equation*}
Moreover, we denote the boundary of this attack region by $\partial\mathcal R_A^{total}$, as shown in Fig.~\ref{fig2_2b}. Next, we will provide the strategy that enables $A$ to reach $\partial\mathcal R_A^{total}$.

\begin{lemma}{\rm(Strategy for the Attacker to Reach the Boundary of the Attack Region)}
    The strategy for $A$ to reach $\partial\mathcal R_A^{total}$ is that $A$ moves at maximum speed along a fixed direction, i.e., $v_A=\bar v_A$ and $\theta_A$ is constant.\label{lm:necessaryconditionofboundary}
\end{lemma}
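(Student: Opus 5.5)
We read the statement as follows: for every point $\mathbf x^\ast\in\partial\mathcal R_A^{total}$, the attacker reaches $\mathbf x^\ast$ along a trajectory that is a straight segment from $\mathbf x_A^0$ traversed at speed $\bar v_A$. The plan is to parametrize $\partial\mathcal R_A^{total}$ by a value-type function, identify the minimizing intermediate point, and read off the trajectory. Define for each $\mathbf x\in\mathbb R^2$
\begin{equation*}
g(\mathbf x)=\min_{\mathbf y\in\mathcal A^\tau}\Big(\frac{\|\mathbf y-\mathbf x\|}{\bar v_A}-\frac{\|\mathbf x_D^0-\mathbf x\|-r}{\bar v_D}\Big).
\end{equation*}
By \eqref{eq:attackregionaftertau}, $\mathbf x\in\mathcal R_A^{total}$ if and only if $g(\mathbf x)\le 0$. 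Since $g$ is continuous, the boundary $\partial\mathcal R_A^{total}$ is contained in $\{g=0\}$.

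First we compute the inner minimization. The second term does not depend on $\mathbf y$, so we only need to minimize $\|\mathbf y-\mathbf x\|$ over the disk $\mathcal A^\tau$. There are two cases.
\begin{itemize}
\item If $\mathbf x\notin\mathcal A^\tau$, the minimizer is the radial projection $\mathbf y^\ast=\mathbf x_A^0+\bar v_A\tau\,\frac{\mathbf x-\mathbf x_A^0}{\|\mathbf x-\mathbf x_A^0\|}$. This gives $\|\mathbf y^\ast-\mathbf x\|=\|\mathbf x-\mathbf x_A^0\|-\bar v_A\tau$.
\item If $\mathbf x\in\mathcal A^\tau$, we can take $\mathbf y^\ast=\mathbf x$, so $g(\mathbf x)=-(\|\mathbf x_D^0-\mathbf x\|-r)/\bar v_D$. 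By Assumption~\ref{as}\ref{as4}, every point of $\mathcal A^\tau$ is at distance more than $r$ from $\mathbf x_D^0$, so $g<0$ on $\mathcal A^\tau$. Hence no boundary point lies in $\mathcal A^\tau$.
\end{itemize}
Consequently, every boundary point satisfies $\|\mathbf x-\mathbf x_A^0\|>\bar v_A\tau$ together with
\begin{equation*}
\frac{\|\mathbf x-\mathbf x_A^0\|}{\bar v_A}=\tau+\frac{\|\mathbf x_D^0-\mathbf x\|-r}{\bar v_D}.
\end{equation*}
This says that arriving at $\mathbf x$ no later than $D$ is tight exactly when $A$ travels the full Euclidean distance $\|\mathbf x-\mathbf x_A^0\|$ in the total time available. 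That time is $\tau$ (the phase when $D$ is idle) plus $D$'s travel time to the capture disk.

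Next we derive the strategy. To reach a boundary point $\mathbf x$ by the time $D$ can reach it, $A$ must first be at the unique minimizer $\mathbf y^\ast$ at time $\tau$. Any other $\mathbf y\in\mathcal A^\tau$ gives $g$-value strictly positive, because $\|\mathbf y-\mathbf x\|>\|\mathbf y^\ast-\mathbf x\|$ for $\mathbf y\ne\mathbf y^\ast$. Reaching $\mathbf y^\ast$ at time $\tau$ requires covering distance exactly $\bar v_A\tau$ in time $\tau$. This forces $v_A=\bar v_A$ and a straight path, since equality holds in $\|\mathbf y^\ast-\mathbf x_A^0\|\le\int_0^\tau\|\dot{\mathbf x}_A\|\,dt\le\bar v_A\tau$ only for motion at full speed along a constant direction. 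The direction is $\theta_A$ with $(\cos\theta_A,\sin\theta_A)=(\mathbf x-\mathbf x_A^0)/\|\mathbf x-\mathbf x_A^0\|$.

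After time $\tau$, the game is standard. By Lemma~\ref{lm:standardoptimalstrategies}, $A$ then moves toward $\mathbf x$ at full speed in a straight line. The direction from $\mathbf y^\ast$ to $\mathbf x$ equals the previous direction, because $\mathbf y^\ast$ lies on the segment from $\mathbf x_A^0$ to $\mathbf x$. Therefore $\theta_A$ is constant over the whole trajectory. The same equality argument applied over $[\tau,t_{\mathbf x}]$ shows that any deviation, either in speed or in heading, makes $A$ arrive strictly later than $D$. Thus straight-line, full-speed motion is also necessary.

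The main obstacle is justifying that the union's boundary coincides with $\{g=0\}$. In particular, we must show that a point with $g=0$ is genuinely attained at the boundary and is not an interior tangency. If needed, we would only use the inclusion $\partial\mathcal R_A^{total}\subseteq\{g=0\}$, which follows from continuity of $g$ and suffices for the statement. We also need to handle the degenerate case $\mathbf x=\mathbf x_A^0$, which is excluded by the case analysis above.
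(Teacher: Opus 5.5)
Your proposal is correct and follows essentially the paper's route. The paper also characterizes boundary points by $\max_{\mathbf x_A^\tau\in\mathcal A^\tau}h(\mathbf x,\mathbf x_A^\tau)=0$ (your $g$ up to sign and the factor $1/\bar v_D$), takes the time-$\tau$ position to be the closest point of the disk $\mathcal A^\tau$ on the segment from $\mathbf x_A^0$ to $\mathbf x$, requires full speed to reach it, and then invokes Lemma~\ref{lm:standardoptimalstrategies} to continue straight toward $\mathbf x$ in the same direction. Your additional steps only make rigorous what the paper leaves implicit: excluding boundary points inside $\mathcal A^\tau$ via Assumption~\ref{as}\ref{as4}, the closedness/continuity argument giving $\partial\mathcal R_A^{total}\subseteq\{g=0\}$, and the equality case of the path-length inequality.
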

\begin{proof}
    For a point $\mathbf x$ on the boundary of $\mathcal R_A^{total}$, since $\mathcal R_A^{total}$ is the union of a collection of $\mathcal R_A(\mathbf x_A^\tau)$, $\mathbf x$ must lie on the boundary of some $\mathcal R_A(\mathbf x_A^\tau)$ and not in the interior of any other $\mathcal R_A(\mathbf x_A^\tau)$. From~\eqref{eq:attackregionaftertau}, we can express the boundary of $\mathcal R_A(\mathbf x_A^\tau)$ analytically as
    \begin{equation}
        h(\mathbf x,\mathbf x_A^\tau)=\|\mathbf x_D^\tau-\mathbf x\|-\frac{\bar v_D}{\bar v_A}\|\mathbf x_A^\tau-\mathbf x\|-r=0,
        \label{eq:boundaryofstandardattackregion}
    \end{equation}
    and points inside $\mathcal R_A(\mathbf x_A^\tau)$ satisfy $h(\mathbf x,\mathbf x_A^\tau)>0$. Therefore, a point $\mathbf x$ on $\partial\mathcal R_A^{total}$ must satisfy
    \begin{equation*}
        \max_{\mathbf x_A^\tau\in\mathcal A^\tau} h(\mathbf x,\mathbf x_A^\tau)=0,
    \end{equation*}
    that is, only for certain $\mathbf x_A^\tau$ does $\mathbf x$ satisfy $h(\mathbf x,\mathbf x_A^\tau)=0$, meaning $\mathbf x$ lies on the boundary of $\mathcal R_A(\mathbf x_A^\tau)$, while for other $\mathbf x_A^\tau$, $\mathbf x$ satisfies $h(\mathbf x,\mathbf x_A^\tau)<0$, meaning $\mathbf x$ lies outside $\mathcal R_A(\mathbf x_A^\tau)$. Therefore, for a point $\mathbf{x}$ on the boundary, $\mathbf x_A^\tau$ must be the $\mathbf x_A^\tau$ that maximizes $h(\mathbf x,\mathbf x_A^\tau)$. In~\eqref{eq:boundaryofstandardattackregion}, since $\|\mathbf x_D^\tau-\mathbf x\|$, $\bar v_D/\bar v_A$ and $r$ are fixed, maximizing $h(\mathbf x,\mathbf x_A^\tau)$ is equivalent to minimizing $\|\mathbf x_A^\tau-\mathbf x\|$, i.e., we need to find the $\mathbf x_A^\tau$ closest to $\mathbf x$. Since $\mathcal A^\tau$ is a disk centered at $\mathbf x_A^0$, the $\mathbf x_A^\tau$ closest to $\mathbf x$ must lie on the line segment connecting $\mathbf x$ and $\mathbf x_A^0$ and on the boundary of the disk $\mathcal A^\tau$. In order for $A$ to reach a point $\mathbf x_A^\tau$ on $\partial\mathcal A^\tau$ at time $\tau$, it must move at maximum speed toward $\mathbf x_A^\tau$. After time $\tau$, according to the optimal strategies in a standard reach-avoid game, both $A$ and $D$ must move at maximum speed toward $\mathbf x$. Combining these two stages, to reach a point $\mathbf x$ on $\partial\mathcal R_A^{total}$, $A$’s strategy over the entire course of the game should be to move at maximum speed directly toward $\mathbf x$, i.e., $v_A=\bar v_A$ and $\theta_A$ is constant.
\end{proof}

From Lemma~\ref{lm:necessaryconditionofboundary}, we know that $A$’s strategy is to move in a straight line at maximum speed toward the capture point. Once $D$ enters the game, it follows the optimal strategy of the standard reach-avoid game; according to Lemma~\ref{lm:priliminary}, i.e., moving in a straight line at maximum speed toward the same capture point. Therefore, throughout the game, both $A$ and $D$ move at maximum speed in a straight line toward the same capture point, with $A$ being $\tau$ time units ahead of $D$. Based on this observation, we can now derive an analytical characterization of the set $\mathcal R_A^{total}$.

\begin{lemma}{\rm(Analytical Representation of the Attack Region)}
    In the reach-avoid game with information delay, the attack region admits the following analytical representation:
    \begin{equation}
        \begin{aligned}
            \mathcal R_A^{total}=\{\mathbf x\in\mathbb R^2\,|\,g(\mathbf x,\mathbf x_A^0)=\|\mathbf x_D^\tau-\mathbf x\|-\frac{\bar v_D}{\bar v_A}\|\mathbf x_A^0-\mathbf x\|\\+\bar v_D\tau-r\ge0\}.
        \end{aligned}
        \label{eq:AttackRegion}
    \end{equation}
    \label{lm:AnalyticalRepresentationoftheAttackRegion}
\end{lemma}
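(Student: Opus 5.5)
The plan is to compute $\mathcal R_A^{total}$ directly from its definition as a union. A point $\mathbf x$ belongs to $\mathcal R_A^{total}$ if and only if there is some $\mathbf x_A^\tau\in\mathcal A^\tau$ with $h(\mathbf x,\mathbf x_A^\tau)\ge0$, where $h$ is defined in~\eqref{eq:boundaryofstandardattackregion}. Equivalently, we need
\begin{equation*}
\max_{\mathbf x_A^\tau\in\mathcal A^\tau} h(\mathbf x,\mathbf x_A^\tau)\ge0 .
\end{equation*}
The maximum is attained because $\mathcal A^\tau$ is compact and $h$ is continuous. So the whole task reduces to evaluating this maximum in closed form and checking that it equals $g(\mathbf x,\mathbf x_A^0)$.

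To evaluate it, I would follow the reasoning in the proof of Lemma~\ref{lm:necessaryconditionofboundary}. In $h$, the only term that depends on $\mathbf x_A^\tau$ is $-\frac{\bar v_D}{\bar v_A}\|\mathbf x_A^\tau-\mathbf x\|$, because $\mathbf x_D^\tau=\mathbf x_D^0$ is fixed. Maximizing $h$ is therefore the same as minimizing $\|\mathbf x_A^\tau-\mathbf x\|$ over the disk $\mathcal A^\tau$. The minimum of that distance over a disk is $\max\{\|\mathbf x-\mathbf x_A^0\|-\bar v_A\tau,\,0\}$.

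I would then split into two cases.

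\emph{Case 1: $\|\mathbf x-\mathbf x_A^0\|\ge\bar v_A\tau$.} The minimum distance is $\|\mathbf x-\mathbf x_A^0\|-\bar v_A\tau$. Substituting into $h$ gives
\begin{equation*}
\|\mathbf x_D^\tau-\mathbf x\|-\frac{\bar v_D}{\bar v_A}\|\mathbf x_A^0-\mathbf x\|+\bar v_D\tau-r ,
\end{equation*}
which is exactly $g(\mathbf x,\mathbf x_A^0)$. In this case $\mathbf x\in\mathcal R_A^{total}$ if and only if $g\ge0$.

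\emph{Case 2: $\mathbf x\in\mathcal A^\tau$.} I need to show that $\mathbf x$ lies in both sets. It is in $\mathcal R_A^{total}$ because taking $\mathbf x_A^\tau=\mathbf x$ gives $h=\|\mathbf x_D^0-\mathbf x\|-r$. Assumption~\ref{as}\ref{as4} together with the triangle inequality yields $\|\mathbf x_D^0-\mathbf x\|>r$, so $h>0$. For membership in the right-hand side of~\eqref{eq:AttackRegion}, I would bound $g$ from below. Using $\|\mathbf x_A^0-\mathbf x\|\le\bar v_A\tau$, we get $g\ge\|\mathbf x_D^0-\mathbf x\|-r$. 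Applying Assumption~\ref{as}\ref{as4} once more shows this is $>0$. Hence both sets contain every point of $\mathcal A^\tau$.

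Combining the two cases gives set equality on all of $\mathbb R^2$. The step I expect to need the most care is Case~2. There the naive substitution of the closest point is invalid, since the minimum distance is $0$ rather than $\|\mathbf x-\mathbf x_A^0\|-\bar v_A\tau$. That is why Assumption~\ref{as}\ref{as4} is needed to show $g\ge0$ inside the disk, so that the formula in~\eqref{eq:AttackRegion} does not misclassify those points. Finally, I would note how this matches the interpretation after Lemma~\ref{lm:necessaryconditionofboundary}: $A$ moves straight toward $\mathbf x$ and effectively gains $\bar v_A\tau$ of distance, which converts to $\bar v_D\tau$ in $D$'s time-scaled terms.
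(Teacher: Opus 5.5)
Your argument is correct, but it takes a different route from the paper. The paper argues through the boundary. It invokes Lemma~\ref{lm:priliminary} and Lemma~\ref{lm:necessaryconditionofboundary} to assert that at a point of $\partial\mathcal R_A^{total}$, $A$ (moving straight toward $\mathbf x$ at full speed with a head start of $\tau$) is captured exactly there. The timing identity $\|\mathbf x_A^0-\mathbf x\|/\bar v_A-\tau=(\|\mathbf x_D^\tau-\mathbf x\|-r)/\bar v_D$ then gives $g=0$ on the boundary, and the paper simply states that points of $\mathcal R_A^{total}$ satisfy $g\ge0$. You instead compute the union directly as the superlevel set $\{\mathbf x : \max_{\mathbf x_A^\tau\in\mathcal A^\tau}h(\mathbf x,\mathbf x_A^\tau)\ge0\}$ and evaluate the maximum with the point-to-disk distance formula.

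The paper's route gives the game-theoretic picture (straight-line play, with $A$ effectively $\tau$ ahead of $D$) that motivates the strategies in Subsection~\ref{sub3b}. Your route is more rigorous in three ways:
\begin{itemize}
\item It is self-contained: it needs only the definition~\eqref{eq:attackregionaftertau} of $\mathcal R_A(\mathbf x_A^\tau)$, not the equilibrium result of Lemma~\ref{lm:priliminary}.
\item It proves both inclusions pointwise, whereas the paper only identifies the boundary curve and asserts one inclusion.
\item It handles points inside $\mathcal A^\tau$, which the paper passes over. The proof of Lemma~\ref{lm:necessaryconditionofboundary} places the closest $\mathbf x_A^\tau$ on $\partial\mathcal A^\tau$, which is valid only for $\mathbf x$ outside the disk.
\end{itemize}
Your Case~2 also shows that Assumption~\ref{as}\ref{as4} is genuinely needed for the stated equality, a dependence the lemma's statement leaves implicit. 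Without it, a point near $\mathbf x_A^0$ lying within distance $r$ of $\mathbf x_D^0$ could have $g>0$ while $\max h<0$.
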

\begin{proof}
    From Lemmas~\ref{lm:priliminary} and~\ref{lm:necessaryconditionofboundary}, a point $\mathbf x$ on $\partial\mathcal R_A^{total}$ satisfies the following property: when $A$ moves $\tau$ time units ahead of $D$ and both players then move toward $\mathbf x$ along straight-line paths at maximum speed, $A$ will be captured by $D$ at $\mathbf x$. That is
    \begin{equation}
        \frac{\|\mathbf x_A^0-\mathbf x\|}{\bar v_A}-\tau=\frac{\|\mathbf x_D^\tau-\mathbf x\|-r}{\bar v_D},
        \label{eq:geometricproperty}
    \end{equation}
    where $\mathbf x_D^\tau=\mathbf x_D^0$, since $D$ remains stationary during the time interval $[0,\tau]$. It then follows from~\eqref{eq:geometricproperty} that a point $\mathbf x$ on $\partial\mathcal R_A^{total}$ satisfies
    \begin{equation*}
        g(\mathbf x,\mathbf x_A^0)=\|\mathbf x_D^\tau-\mathbf x\|-\frac{\bar v_D}{\bar v_A}\|\mathbf x_A^0-\mathbf x\|+\bar v_D\tau-r=0.
    \end{equation*}
    Moreover, for any point in $\mathcal R_A^{total}$, attacker $A$ can guarantee reaching it no later than $D$. Therefore, for any $\mathbf x\in\mathcal R_A^{total}$, we have $g(\mathbf x,\mathbf x_A^0)\ge0$.
\end{proof}

Lemma~\ref{lm:AnalyticalRepresentationoftheAttackRegion} provides an analytical characterization of the attack region. We  now proceed to study its convexity. As a preliminary step, we first study a class of closed and bounded planar regions and characterize their convexity.

\begin{lemma}{\rm(Convexity Criterion)}
    Suppose that $\alpha>1$ and $-\|\mathbf x_1-\mathbf x_2\|<b<(\alpha-1)\|\mathbf x_1-\mathbf x_2\|$. Then the set
    \begin{equation}
        S=\{\mathbf x\in\mathbb R^2\,|\,\|\mathbf x-\mathbf x_1\|-\alpha\|\mathbf x-\mathbf x_2\|+b\ge0\}
        \label{eq:convexset}
    \end{equation}
    is convex.
    \label{lm:ConvexityCriterion}
\end{lemma}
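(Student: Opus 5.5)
The plan is to view $S$ as a sublevel set and verify convexity through the curvature of its boundary. Write $\varphi(\mathbf x)=\alpha\|\mathbf x-\mathbf x_2\|-\|\mathbf x-\mathbf x_1\|-b$, so that $S=\{\varphi\le 0\}$. Set $D=\|\mathbf x_1-\mathbf x_2\|>0$, which is forced by the hypothesis. The argument has three steps: a radial (star-shape) analysis around $\mathbf x_2$, a regularity check of the boundary, and a curvature computation showing the boundary turns consistently toward the interior.

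\emph{Step 1 (star-shapedness and boundedness).} Parametrize $\mathbf x=\mathbf x_2+\rho\mathbf u$ with $\|\mathbf u\|=1$ and $\rho\ge0$. Along each ray the derivative of $\rho\mapsto\varphi$ is $\alpha-\mathbf u^\top(\mathbf x-\mathbf x_1)/\|\mathbf x-\mathbf x_1\|\ge\alpha-1>0$, so $\varphi$ is strictly increasing and tends to $+\infty$. At $\rho=0$ we have $\varphi(\mathbf x_2)=-D-b<0$, using $b>-D$. Hence every ray leaves $\mathbf x_2$ inside $S$ and meets $\partial S$ exactly once, at some radius $\rho^*(\mathbf u)$. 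Therefore $S$ is compact, star-shaped about the interior point $\mathbf x_2$, and $\partial S=\{\varphi=0\}$ is a closed curve. Moreover $\varphi(\mathbf x_1)=\alpha D-b>D>0$ by $b<(\alpha-1)D$, so $\mathbf x_1\notin S$.

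\emph{Step 2 (regularity).} On $\partial S$ neither $\mathbf x_1$ nor $\mathbf x_2$ occurs, so $\varphi$ is smooth in a neighborhood of $\partial S$. Its gradient is $\nabla\varphi=\alpha\mathbf e_2-\mathbf e_1$, where $\mathbf e_k=(\mathbf x-\mathbf x_k)/\|\mathbf x-\mathbf x_k\|$. Since $\alpha>1$, $\|\nabla\varphi\|\ge\alpha-1>0$, so $\partial S$ is a smooth simple closed curve. A compact region bounded by such a curve is convex iff the curve's curvature has constant sign toward the interior. With $S=\{\varphi\le0\}$, this is equivalent to
\[
\mathbf t^\top\nabla^2\varphi(\mathbf x)\,\mathbf t\ge 0\quad\text{for all }\mathbf x\in\partial S,\ \mathbf t\perp\nabla\varphi(\mathbf x).
\]

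\emph{Step 3 (curvature inequality, the main obstacle).} For $\rho_k=\|\mathbf x-\mathbf x_k\|$, the Hessian of $\|\mathbf x-\mathbf x_k\|$ is $(I-\mathbf e_k\mathbf e_k^\top)/\rho_k$. Therefore
\[
\mathbf t^\top\nabla^2\varphi\,\mathbf t=\frac{\alpha(\mathbf n^\top\mathbf e_2)^2}{\rho_2}-\frac{(\mathbf n^\top\mathbf e_1)^2}{\rho_1},
\]
where $\mathbf n$ is the unit normal, using $1-(\mathbf t^\top\mathbf e_k)^2=(\mathbf n^\top\mathbf e_k)^2$. Let $\gamma$ be the angle between $\mathbf e_1$ and $\mathbf e_2$. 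Then $\mathbf n^\top\mathbf e_2=(\alpha-\cos\gamma)/m$ and $\mathbf n^\top\mathbf e_1=(\alpha\cos\gamma-1)/m$ with $m=\|\alpha\mathbf e_2-\mathbf e_1\|$. The claim thus reduces to
\[
\alpha\rho_1(\alpha-\cos\gamma)^2\ge\rho_2(\alpha\cos\gamma-1)^2
\]
on $\partial S$. Two relations are available there: the boundary equation $\rho_1=\alpha\rho_2-b$, and the law of cosines $D^2=\rho_1^2+\rho_2^2-2\rho_1\rho_2\cos\gamma$.

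I would first eliminate $\rho_1$ and $\cos\gamma$ using these two relations. That leaves an inequality in $\rho_2$ alone, with parameters $\alpha,b,D$. I would then check it using $\rho_2\le\rho^*_{\max}$ together with both bounds $-D<b<(\alpha-1)D$. A useful first observation is that $\alpha\ge1$ gives $\alpha(\alpha-\cos\gamma)^2\ge(\alpha\cos\gamma-1)^2$ whenever $\cos\gamma\ge0$, because $\alpha-\cos\gamma\ge|\alpha\cos\gamma-1|$ in that case. So it would suffice to show $\rho_1\ge\rho_2$ there, i.e.\ $(\alpha-1)\rho_2\ge b$, which should follow from the boundary equation and the triangle inequality. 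The case $\cos\gamma<0$ (the point lies between the foci as seen angularly) is where the quantitative bounds on $b$ must be used, and I expect this case distinction to be the delicate part of the proof.
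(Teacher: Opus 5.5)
\textbf{Verdict: genuine gap.} Your Steps 1--2 are correct, and so is the reduction in Step 3 to
\[
F:=\alpha\rho_1(\alpha-\cos\gamma)^2-\rho_2(\alpha\cos\gamma-1)^2\ge0 \quad\text{on } \partial S .
\]
This level-set Hessian route differs from the paper's. The paper writes $\partial S$ in polar form about $\mathbf x_2$ and shows the curvature decreases in $w=\cos\theta$, so it is smallest at the boundary point facing $\mathbf x_1$.

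However, your proposal stops exactly where the content of the lemma lies, and the one concrete step you suggest is false. On $\partial S$ you have $\rho_1-\rho_2=(\alpha-1)\rho_2-b$. The minimum of $\rho_2$ over $\partial S$ is $(D+b)/(\alpha+1)$, attained on the segment $[\mathbf x_2,\mathbf x_1]$. So $\rho_1\ge\rho_2$ on $\partial S$ holds only when $b\le(\alpha-1)D/2$. Restricting to $\cos\gamma\ge0$ does not save it. Take $\alpha=2$, $D=1$, $b=0.9$: the boundary point with $\rho_2=0.85$, $\rho_1=0.8$ has $\cos\gamma=(\rho_1^2+\rho_2^2-D^2)/(2\rho_1\rho_2)\approx0.27>0$, yet $\rho_1<\rho_2$.

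The case $\cos\gamma<0$ is left entirely open. It contains the point on $[\mathbf x_2,\mathbf x_1]$ where $\cos\gamma=-1$ and the curvature equals $(\alpha+1)\bigl((\alpha-1)D-b\bigr)/\bigl((D+b)(\alpha D-b)\bigr)$. That is exactly where the hypothesis $b<(\alpha-1)D$ is sharp. Note also that what is needed is a lower bound on $\rho_2$, not the upper bound $\rho_2\le\rho^*_{\max}$ you mention.

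\textbf{The fix.} You discard the factor $\alpha$ too early, and no case split is needed. For every $\gamma$, $(\alpha-\cos\gamma)^2-(\alpha\cos\gamma-1)^2=(\alpha^2-1)\sin^2\gamma\ge0$, so
\[
F=(\alpha\rho_1-\rho_2)(\alpha\cos\gamma-1)^2+\alpha(\alpha^2-1)\rho_1\sin^2\gamma .
\]
It therefore suffices to show $\alpha\rho_1>\rho_2$ on $\partial S$. The boundary equation together with the triangle inequality $\rho_1\ge D-\rho_2$ gives $(\alpha+1)\rho_2\ge D+b>0$. Hence
\[
\alpha\rho_1-\rho_2=(\alpha^2-1)\rho_2-\alpha b\ge(\alpha-1)(D+b)-\alpha b=(\alpha-1)D-b>0 .
\]
Then $F>0$ on all of $\partial S$, and your argument closes.

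Completed this way, your route is shorter than the paper's computation of $\kappa'(w)$. It also makes the role of each hypothesis explicit: $b>-D$ puts $\mathbf x_2$ in the interior of $S$, and $b<(\alpha-1)D$ is precisely $\alpha\rho_1>\rho_2$ at the boundary point where $\rho_2$ is smallest.
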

\begin{proof}
    For computational convenience, we place $\mathbf x_2$ at the origin and choose the direction from $\mathbf x_2$ to $\mathbf x_1$ as the positive $x$-axis, i.e., we set $\mathbf x_1=(c,0)$, $\mathbf x_2=(0,0)$, where $c=\|\mathbf x_1-\mathbf x_2\|$. Define $k(\mathbf x)=\|\mathbf x-\mathbf x_1\|-\alpha\|\mathbf x-\mathbf x_2\|+b$. Then any boundary point $\mathbf x$ satisfies $k(\mathbf x)=0$. Let $\mathbf x$ be a boundary point with polar coordinates $(\rho\cos\theta,\rho\sin\theta)$. Then $\rho$ and $\theta$ satisfy
    \begin{equation*}
        (\alpha^2-1)\rho^2-2(\alpha b-c\cos\theta)\rho+b^2-c^2=0.
    \end{equation*}
    Accordingly, we obtain an analytical expression for $\rho$ as a function of $\theta$:
    \begin{equation}
        \begin{aligned}
            &\rho(\theta)\\=&\frac{(\alpha b-c\cos\theta)+\sqrt{(\alpha b-c\cos\theta)^2-(\alpha^2-1)(b^2-c^2)}}{\alpha^2-1}.
        \end{aligned}
        \label{eq:relationshipbetweenrandtheta}
    \end{equation}
    Since all terms involving $\theta$ in~\eqref{eq:relationshipbetweenrandtheta} appear through $\cos\theta$, we introduce the substitution $w=\cos\theta$. The resulting analytical expression for $\rho$ as a function of $w$ is given by
    \begin{equation}
        \rho(w)=\frac{(\alpha b-cw)+\sqrt{(\alpha b-cw)^2-(\alpha^2-1)(b^2-c^2)}}{\alpha^2-1}.
        \label{eq:relationshipbetweenrandw}
    \end{equation}
    With the relationship between $\rho$ and $\theta$ established, we can compute the curvature at a boundary point $\mathbf x$ as 
    \begin{equation}
        \kappa=\frac{\rho^2+2\rho_\theta^2-\rho\rho_{\theta\theta}}{(\rho^2+\rho_\theta^2)^{\frac{3}{2}}},
        \label{eq:Curvatureoftheta}
    \end{equation}
    where $\rho_\theta$ and $\rho_{\theta\theta}$ denote the first- and second-order derivatives of $\rho$ with respect to $\theta$, respectively.

    From~\eqref{eq:relationshipbetweenrandw}, we can compute the first- and second-order derivatives of $\rho$ with respect to $w$ as
    \begin{equation}
            \rho_w=-\frac{c\rho}{Q}, \rho_{ww}=\frac{c(\rho_w(\alpha b-cw)+c\rho)}{Q^2},
        \label{eq:derivativeofrwithrespecttow}
    \end{equation}
    where $Q=(\alpha^2-1)\rho-(\alpha b-cw)=\sqrt{(\alpha b-cw)^2-(\alpha^2-1)(b^2-c^2)}>0$. Moreover,
    \begin{equation}
        \begin{aligned}
            \rho_\theta&=-\rho_w\sin\theta,\\\rho_{\theta\theta}&=-\rho_w\cos\theta+\rho_{ww}(1-\cos^2\theta).
        \end{aligned}
        \label{eq:derivativeofrwithrespecttotheta}
    \end{equation}
    We then substitute~\eqref{eq:derivativeofrwithrespecttotheta} into~\eqref{eq:Curvatureoftheta} and obtain
    \begin{equation}
        \kappa(w)=\frac{\rho^2+2(1-w^2)\rho_w^2+w\rho\rho_w-(1-w^2)\rho\rho_{ww}}{(\rho^2+(1-w^2)\rho_w^2)^{\frac{3}{2}}}.
        \label{eq:Curvatureofw}
    \end{equation}
    Substituting~\eqref{eq:derivativeofrwithrespecttow} into~\eqref{eq:Curvatureofw} and differentiating with respect to $w$ yield
    \begin{equation*}
        \kappa'(w)=-K(w)\Psi(w),
    \end{equation*}
    where
    \begin{equation*}
        \begin{aligned}
            K(w)&=\frac{2(1-w^2)c^2\rho^3(\alpha^2-1)}{(\rho^2+(1-w^2)\rho_w^2)^{\frac{5}{2}}Q^3},\\\Psi(w)&=Q+\frac{(\alpha b-cw)((\alpha^2-1)\rho^2+(1-w^2)c^2)}{Q^2}.
        \end{aligned}
    \end{equation*}
    For $w\in[-1,1]$, $\alpha>1$ and $-c<b<(\alpha-1)c$, we have $K(w)>0$ and $\Psi(w)>0$. Thus, $\kappa'(w)<0$ for $w \in [-1, 1]$, indicating that the curvature of the boundary of $S$ is strictly decreasing with respect to $w$. Therefore, for $w\in[-1,1]$, we have
    \begin{equation*}
        \kappa(w)\ge\kappa(1)=\frac{(\alpha+1)((\alpha-1)c-b)}{(b+c)(\alpha c-b)}>0,
    \end{equation*}
    which means the curvature of all points on the boundary of $S$ is strictly positive, i.e., $S$ is a convex set.
\end{proof}

Based on Lemma~\ref{lm:AnalyticalRepresentationoftheAttackRegion} and Lemma~\ref{lm:ConvexityCriterion}, we can determine the convexity of the attack region~\eqref{eq:AttackRegion} in the reach-avoid game with information delay.

\begin{theorem}{\rm(Convexity of the Attack Region in the Reach-Avoid Game with Information Delay)}
Under Assumption~\ref{as}, the attack region~\eqref{eq:AttackRegion} in the reach-avoid game with information delay is convex.
\end{theorem}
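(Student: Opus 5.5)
The plan is to read the attack region~\eqref{eq:AttackRegion} as an instance of the set $S$ in~\eqref{eq:convexset} and then apply Lemma~\ref{lm:ConvexityCriterion}. Set
\begin{equation*}
\mathbf x_1=\mathbf x_D^\tau=\mathbf x_D^0,\quad \mathbf x_2=\mathbf x_A^0,\quad \alpha=\frac{\bar v_D}{\bar v_A},\quad b=\bar v_D\tau-r,
\end{equation*}
with $c=\|\mathbf x_D^0-\mathbf x_A^0\|$. Then $g(\mathbf x,\mathbf x_A^0)=\|\mathbf x-\mathbf x_1\|-\alpha\|\mathbf x-\mathbf x_2\|+b$, so $\mathcal R_A^{total}=S$ exactly. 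It therefore suffices to verify the three hypotheses of Lemma~\ref{lm:ConvexityCriterion}: $\alpha>1$, $b>-c$, and $b<(\alpha-1)c$.

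The condition $\alpha>1$ is immediate from Assumption~\ref{as}~\ref{as3}, since $\bar v_D>\bar v_A$.

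For the lower bound $b>-c$, I need $r-\bar v_D\tau<c$. Assumption~\ref{as}~\ref{as4} gives $c>r+\bar v_A\tau\ge r>r-\bar v_D\tau$, which settles it.

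The upper bound $b<(\alpha-1)c$ is the main obstacle. Multiplying through by $\bar v_A>0$, it becomes
\begin{equation*}
\bar v_A\bar v_D\tau-\bar v_A r<(\bar v_D-\bar v_A)c .
\end{equation*}
I will use Assumption~\ref{as}~\ref{as4} in the form $c>r+\bar v_A\tau$. Because $\bar v_D-\bar v_A>0$, this gives
\begin{equation*}
(\bar v_D-\bar v_A)c>(\bar v_D-\bar v_A)(r+\bar v_A\tau)=\bar v_D r+\bar v_A\bar v_D\tau-\bar v_A r-\bar v_A^2\tau .
\end{equation*}
The right-hand side exceeds $\bar v_A\bar v_D\tau-\bar v_A r$ precisely when $\bar v_D r>\bar v_A^2\tau$.

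To close this last gap I apply Assumption~\ref{as}~\ref{as5}, $r>\bar v_A\tau$, together with $\bar v_D>\bar v_A$:
\begin{equation*}
\bar v_D r>\bar v_A\cdot\bar v_A\tau .
\end{equation*}
With all three hypotheses verified, Lemma~\ref{lm:ConvexityCriterion} yields convexity of $\mathcal R_A^{total}$.

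One degenerate case needs a remark. The lemma's proof places $\mathbf x_2$ at the origin and uses polar coordinates, which requires $c>0$. This holds because of Assumption~\ref{as}~\ref{as4}.
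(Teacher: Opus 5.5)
Your proposal is correct and follows the paper's proof. It identifies the attack region with $S$ via $\alpha=\bar v_D/\bar v_A$, $b=\bar v_D\tau-r$ and $c=\|\mathbf x_D^0-\mathbf x_A^0\|$, then verifies $-c<b<(\alpha-1)c$ from Assumption~\ref{as}~\ref{as4} and~\ref{as5}. Your residual condition $\bar v_D r>\bar v_A^2\tau$ is simply the unpacked form of the paper's first inequality $\bar v_D\tau-r<(\bar v_D/\bar v_A-1)(r+\bar v_A\tau)$.
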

\begin{proof}
    By comparing the attack region~\eqref{eq:AttackRegion} and set~\eqref{eq:convexset}, we observe that $\alpha=\bar v_D/\bar v_A>1$ and $b=\bar v_D\tau-r$. According to Lemma~\ref{lm:ConvexityCriterion}, to prove the convexity of~\eqref{eq:AttackRegion}, it suffices to show that
    \begin{equation}
        -\|\mathbf x_A-\mathbf x_D\|<\bar v_D\tau-r<(\bar v_D/\bar v_A-1)\|\mathbf x_A-\mathbf x_D\|.
        \label{eq:convexofAR}
    \end{equation}
    
    According to Assumption~\ref{as}, we have $r>\bar v_A\tau$ and $\|\mathbf x_A-\mathbf x_D\|>r+\bar v_A\tau$. Therefore, we have
    \begin{equation*}
        \begin{aligned}
            \bar v_D\tau-r&<(\frac{\bar v_D}{\bar v_A}-1)(r+\bar v_A\tau)<(\frac{\bar v_D}{\bar v_A}-1)\|\mathbf x_A-\mathbf x_D\|,\\\bar v_D\tau-r&>-r>-\|\mathbf x_A-\mathbf x_D\|+\bar v_A\tau>-\|\mathbf x_A-\mathbf x_D\|,
        \end{aligned}
    \end{equation*}
    which means~\eqref{eq:convexofAR} is proven. Therefore,~\eqref{eq:AttackRegion} is a convex set.
\end{proof}

    With the analytical characterization of the attack region and the proof of its convexity in place, we are now ready to determine the winning conditions of the game. If $\mathcal R_A^{total}\cap\mathcal T\ne\emptyset$, then $A$ can guarantee a win, i.e., regardless of the strategy adopted by $D$, there always exists a strategy for $A$ that ensures that it reaches $\mathcal T$ before being captured. On the other hand, if $\mathcal R_A^{total}\cap\mathcal T=\emptyset$, then $D$ can guarantee a win, i.e., regardless of the strategy adopted by $A$, there always exists a strategy for $D$ that ensures $A$ is captured before reaching $\mathcal T$. Moreover, algorithms for determining whether two analytically represented convex sets intersect are already well established.

\subsection{Optimal Strategies}\label{sub3b}
We now have an analytical characterization of the attack region. The area covered by this region is precisely the dominant region of $A$, since at every point within it, $A$ can guarantee arriving earlier than $D$. We also describe how to determine the winning conditions of the game. Therefore, when $A$ can guarantee a win, it suffices for $A$ to move toward the region $\mathcal R_A^{total}\cap\mathcal T$, and $D$ cannot capture $A$ regardless of its strategy. When $D$ can guarantee a win, by contrast, further analysis is required to identify equilibrium strategies under which the minimum distance between $A$ and the target region is optimized during the game. Hence, the subsequent discussion focuses on the case when $D$ can guarantee a win.

According to existing results on reach-avoid games, since the objective of $A$ is to minimize its minimum distance to the target region during the game, it will seek the point in its dominant region that is closest to the target region, i.e., the point on $\mathcal R_A^{total}$ closest to $\mathcal T$. Therefore, the capture point $\mathbf x^*$ in this problem, that is, the position at which $A$ is captured by $D$, is the point on $\mathcal R_A^{total}$ closest to $\mathcal T$, i.e., the closest point to the target region that $A$ can guarantee reaching no later than $D$. Specifically, let $(\mathbf x^*,\mathbf y^*)$ be the optimal solution of the following optimization problem, denoted by $\mathcal P^1(\mathbf x_A^0,\mathbf x_D^\tau,\tau)$:
\begin{equation}
    \begin{aligned}
        \min_{\mathbf x\in\mathbb R^2,\mathbf y\in\mathbb R^2}\quad&\|\mathbf x-\mathbf y\|\\s.t.\quad&\|\mathbf x_D^\tau-\mathbf x\|-\frac{\bar v_D}{\bar v_A}\|\mathbf x_A^0-\mathbf x\|+\bar v_D\tau-r\ge0,\\&f(\mathbf y)\le0,
    \end{aligned}
    \label{eq:optimizationproblem}
\end{equation}
which is a convex optimization problem. Then  $\mathbf x^*$ is the capture point. In other words, $\mathbf x^*$ is the point in $\mathcal R_A^{\rm total}$ that is closest to the target region $\mathcal T$; equivalently, it is the closest point to $\mathcal T$ that $A$ can guarantee reaching no later than $D$.

Based on the preceding analysis of both players’ strategies and~\eqref{eq:optimizationproblem}, we can characterize the strategies of the two players at time $t\ge\tau$ as 
\begin{equation}
    \begin{aligned}
        v_D^*&=\bar v_D,\quad\theta_D^*=\arctan\frac{y_D^*-y_D^t}{x_D^*-x_D^t},\\v_A^*&=\bar v_A,\quad\theta_A^*=\arctan\frac{y_A^*-y_A^t}{x_A^*-x_A^t},
    \end{aligned}
    \label{eq:optimalstrategyaftertau}
\end{equation}
where $\mathbf x_D^*=(x_D^*,y_D^*)$ and $\mathbf x_A^*=(x_A^*,y_A^*)$ are the first component of the optimal solutions to optimization problems $\mathcal P^1(\mathbf x_A^{t-\tau},\mathbf x_D^t,\tau)$ and $\mathcal P^1(\mathbf x_A^t,\mathbf x_D^{t+\tau},\tau)$, respectively.

We note that $\mathbf x_D^{t+\tau}$, which represents the position of $D$ after a delay of $\tau$, is unknown at time $t$. Therefore, in order to determine its current strategy, $A$ must use its own historical information over the interval $[t-\tau,t]$, together with the currently observed opponent position $\mathbf x_D^t$, to infer $D$’s motion over the next $\tau$ time units. Based on this inference, $A$ can estimate $\mathbf x_D^{t+\tau}$ and thus determine its current strategy.

In addition, strategy~\eqref{eq:optimalstrategyaftertau} applies only when $t\ge\tau$. For $t<\tau$, $D$ has not yet entered the game and thus has no strategy, whereas the strategy of $A$ is given by
\begin{equation}
    v_A^*=\bar v_A,\quad\theta_A^*=\arctan\frac{y_A^*-y_A^t}{x_A^*-x_A^t},
    \label{eq:optimalstrategybeforetau}
\end{equation}
where $\mathbf x_A^*=(x_A^*,y_A^*)$ is the first block component of the optimal solution of the optimization problem $\mathcal P^1(\mathbf x_A^t,\mathbf x_D^0,\tau-t)$.

\subsection{Equilibrium Verification}\label{sub3c}
Now that we have obtained the strategies~\eqref{eq:optimalstrategyaftertau} and~\eqref{eq:optimalstrategybeforetau} for the two players at time $t$, we next examine whether they constitute an equilibrium of the differential game. Unlike standard reach-avoid games, the delayed-information setting features sequential entry of the two players, which aligns more naturally with the structure of dynamic games in game theory. Accordingly, to prove that strategies~\eqref{eq:optimalstrategyaftertau} and~\eqref{eq:optimalstrategybeforetau} constitute an equilibrium, it is necessary to establish their optimality in the sense of a subgame-perfect Nash equilibrium, i.e., they must be optimal at every stage of the game (that is, in every subgame).

\begin{theorem}{\rm(Subgame-Perfect Equilibrium)}
    In the reach-avoid game with information delay, strategies~\eqref{eq:optimalstrategyaftertau} and~\eqref{eq:optimalstrategybeforetau} are optimal strategies in the sense of a subgame-perfect Nash equilibrium.
\end{theorem}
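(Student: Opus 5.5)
The plan is to exploit the sequential structure created by the delay and argue by backward induction over two phases: the subgames starting at some time $t\ge\tau$ (both players active) and those starting at $t<\tau$ (only $A$ active). In every subgame the objects to compare are the value $J$ in~\eqref{eq:costfunction} under the proposed profile and the value obtained when one player deviates unilaterally. Throughout I use $\mathbf x^*$ for the first component of the solution of $\mathcal P^1$ at the subgame's root, and I take the convexity theorem as given, so that this problem is a well-posed convex program whose minimizer lies on $\partial\mathcal R_A^{total}$.

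\textbf{Subgames starting at $t\ge\tau$.} At such a time the motion of $D$ over $[t,t+\tau]$ is already determined by $A$'s history on $[t-\tau,t]$. The subgame is therefore equivalent to one in which $A$ sits at $\mathbf x_A^t$ and $D$ is effectively at $\mathbf x_D^{t+\tau}$ with a head start of $\tau$ for $A$. This is exactly the configuration of Lemma~\ref{lm:AnalyticalRepresentationoftheAttackRegion}, with $(\mathbf x_A^0,\mathbf x_D^\tau)$ replaced by $(\mathbf x_A^t,\mathbf x_D^{t+\tau})$. The argument has three steps.
\begin{enumerate}[label=(\roman*)]
\item \emph{Consistency.} Along the proposed straight-line trajectories the minimizer of $\mathcal P^1$ is invariant in time. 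Both players advance toward $\mathbf x^*$ along straight lines at full speed, so equality~\eqref{eq:geometricproperty} at $\mathbf x^*$ is preserved. The shifted attack region shrinks but still contains $\mathbf x^*$ on its boundary, and it is contained in the old one. Hence $\mathbf x^*$ remains the closest point to $\mathcal T$. This also shows that $D$'s target $\mathcal P^1(\mathbf x_A^{t-\tau},\mathbf x_D^t,\tau)$ and $A$'s target $\mathcal P^1(\mathbf x_A^t,\mathbf x_D^{t+\tau},\tau)$ coincide, because both equal the $\mathbf x^*$ fixed at the subgame's root.
\item \emph{$A$ cannot gain by deviating.} Suppose $D$ plays~\eqref{eq:optimalstrategyaftertau}. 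If $A$ deviated and reached a point $\mathbf z$ strictly closer to $\mathcal T$ than $\mathbf x^*$ without capture, then $\mathbf z$ would be a point $A$ reaches no later than $D$. It would therefore lie in $\mathcal R_A^{total}$, contradicting optimality of $\mathbf x^*$ in~\eqref{eq:optimizationproblem}.
\item \emph{$D$ cannot gain by deviating.} Suppose $A$ plays~\eqref{eq:optimalstrategyaftertau}. $A$ moves straight to $\mathbf x^*$, arriving exactly when $D$ can first make the capture. Any deviation of $D$ only delays capture, so $A$ gets at least as close to $\mathcal T$ and $J$ does not increase.
\end{enumerate}
Within the post-$\tau$ phase, once $D$ has committed to its delayed tracking, the game reduces to a standard reach-avoid game with shifted initial data. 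Lemma~\ref{lm:standardoptimalstrategies} can therefore be invoked for the Nash property inside this phase.

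\textbf{Subgames starting at $t<\tau$.} Here only $A$ chooses actions, and $D$'s continuation is fixed by the equilibrium from the previous step, so this is the backward-induction step. The payoff of any $\mathbf x_A^\tau\in\mathcal A^\tau$ is the minimum distance of $\mathcal R_A(\mathbf x_A^\tau)$ to $\mathcal T$. Minimizing this over $\mathcal A^\tau$ is the same as minimizing over the union $\mathcal R_A^{total}$, which is $\mathcal P^1(\mathbf x_A^t,\mathbf x_D^0,\tau-t)$. By Lemma~\ref{lm:necessaryconditionofboundary}, the best choice is to head straight at full speed toward the minimizer, which is~\eqref{eq:optimalstrategybeforetau}. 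The time-consistency argument of step~(i), applied with the shrinking disk $\mathcal A^\tau$, shows the target does not change as $t$ increases.

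The main obstacle will be the \emph{off-equilibrium-path} subgames, since subgame perfection requires optimality even after arbitrary past deviations. At $t\ge\tau$, a prior deviation by $A$ makes $D$'s motion over $[t,t+\tau]$ follow the recomputed targets $\mathcal P^1(\mathbf x_A^{s-\tau},\mathbf x_D^s,\tau)$. These are no longer a straight line, so $\mathbf x_D^{t+\tau}$ is not simply $\mathbf x_D^t$ displaced toward a fixed point. My first attempt will treat the subgame's effective initial condition as the predicted pair $(\mathbf x_A^t,\mathbf x_D^{t+\tau})$, which is a deterministic function of the history. The consistency and deviation arguments of steps~(i)--(iii) are then rerun with this pair, using Assumption~\ref{as}~\ref{as5} to keep the hypotheses of Lemma~\ref{lm:ConvexityCriterion} valid. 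For this I need to show that the inequality~\eqref{eq:convexofAR} persists in every reachable subgame, plausibly because $D$ remains at distance at least $r$ from $A$ until capture.
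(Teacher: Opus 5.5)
Your proposal follows the paper's proof. It uses backward induction over the two phases: every subgame starting at $t\ge\tau$ is reduced to a standard reach-avoid game and settled by Lemma~\ref{lm:standardoptimalstrategies}, and the phase $t<\tau$ is settled by noting that optimizing over $\mathbf x_A^\tau\in\mathcal A^\tau$ and $\mathbf x\in\mathcal R_A(\mathbf x_A^\tau)$ is the same as solving $\mathcal P^1$ over the union $\mathcal R_A^{total}$.

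The paper does not treat off-equilibrium subgames separately or re-check convexity along play, so your steps (i)--(iii) and your flagged obstacle go beyond its proof. Step (ii) only restates what Lemma~\ref{lm:standardoptimalstrategies} is invoked to supply, since evading $D$'s feedback strategy does not by itself place $\mathbf z$ in $\mathcal R_A^{total}$. The repair you sketch for the obstacle would fail, however. What enters~\eqref{eq:convexofAR} is $\|\mathbf x_A^t-\mathbf x_D^{t+\tau}\|$, not $\|\mathbf x_A^t-\mathbf x_D^t\|\ge r$. If $D$ closes on $\mathbf x^*$ from directly behind $A$, this quantity equals $r-\bar v_A\tau$ one delay before capture, even on the equilibrium path. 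That violates the upper bound in~\eqref{eq:convexofAR} whenever $\bar v_D\tau-r>(\bar v_D/\bar v_A-1)(r-\bar v_A\tau)$, which Assumption~\ref{as} allows (e.g.\ $\bar v_D=1$, $\bar v_A=0.5$, $\tau=0.3$, $r=0.2$).
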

\begin{proof}
    For the reach-avoid game with information delay, the game can be naturally divided into two stages: $t\ge\tau$ and $t<\tau$, corresponding to two subproblems. We then show,  via backward induction, that the strategies for both stages are optimal in the sense of a Nash equilibrium.

    From the analysis in Subsection~\ref{sub3a}, when $t\ge\tau$, the game reduces to a standard reach-avoid game. Accordingly, the optimal strategies in this stage are exactly those characterized in Lemma~\ref{lm:standardoptimalstrategies}, and they are optimal in the sense of a Nash equilibrium.

    When $t<\tau$, $A$ can move freely in $\mathbb R^2$. Hence, its optimal strategy is to choose its position at time $\tau$ such that the capture point $\mathbf x^*$ induced in the stage $t\ge\tau$ is as close to the target region as possible. Let $\mathcal A^\tau$ be the set of all positions reachable by $A$ by time $\tau$. Then, for $t<\tau$, $A$ moves toward the capture point $\mathbf x^*$, where $\mathbf x^*$ is determined as the first component of $(\mathbf x^*,\mathbf y^*)$, which is the optimal solution to 
    \begin{equation}
        \begin{aligned}
            \min_{\mathbf x\in\mathbb R^2,\mathbf y\in\mathbb R^2}\quad&\|\mathbf x-\mathbf y\|\\s.t.\quad&\mathbf x_A^\tau\in\mathcal A^\tau,\\\quad&\mathbf x\in\mathcal R_A(\mathbf x_A^\tau),\\&f(\mathbf y)\le0.
        \end{aligned}
        \label{eq:optimizationproblemofSPNE}
    \end{equation}

    Since $\mathcal R_A^{total}$ is the union of $\mathcal R_A(\mathbf x_A^\tau)$ over all $\mathbf x_A^\tau\in\mathcal A^\tau$,  problems~\eqref{eq:optimizationproblemofSPNE} and~\eqref{eq:optimizationproblem} are  equivalent, in the sense that they admit the same optimal solution and yield the same strategies. Therefore,  because problem~\eqref{eq:optimizationproblemofSPNE} defines a strategy that is optimal in the sense of a subgame-perfect Nash equilibrium, strategies~\eqref{eq:optimalstrategyaftertau} and~\eqref{eq:optimalstrategybeforetau} are also optimal in this sense.
\end{proof}

Therefore, we obtain subgame-perfect Nash equilibrium strategies for the one-attacker-one-defender reach-avoid game with information delay.

\section{Multiplayer Reach-Avoid Game and Assignment with Information Delay}\label{multi}
In this section, building upon the analytical framework and results established for the one-attacker-one-defender reach-avoid game in Section~\ref{1vs1}, we extend the problem first to the one-attacker-multiple-defender case, and then to the more general setting involving $M$ attackers and $N$ defenders.

\subsection{One-Attacker-Multiple-Defender Reach-Avoid Games with Information Delay}
We begin with the reach-avoid game with one attacker and multiple defenders, i.e., $M = 1$. As before, for notational simplicity, we omit the subscript $j$ in this subsection. The players’ objectives and the cost function are exactly the same as in the one-attacker-one-defender setting. Specifically, $A$ aims to reach the target region $\mathcal T$, or if captured beforehand, to be as close to $\mathcal T$ as possible. Each defender $\di$ aims to prevent $A$ from entering $\mathcal T$ and, throughout the game, to keep $A$ as far from $\mathcal T$ as possible. The cost function is again given by~\eqref{eq:costfunction}. Thus, if $J=0$, then $A$ wins; otherwise, $A$ aims to minimize $J$, whereas $\di$ aims to maximize it. We assume throughout that the indices $i$ are ordered according to $\tau_i$ in ascending order.

\begin{figure}[!t]
\centerline{\includegraphics[width=0.8\columnwidth]{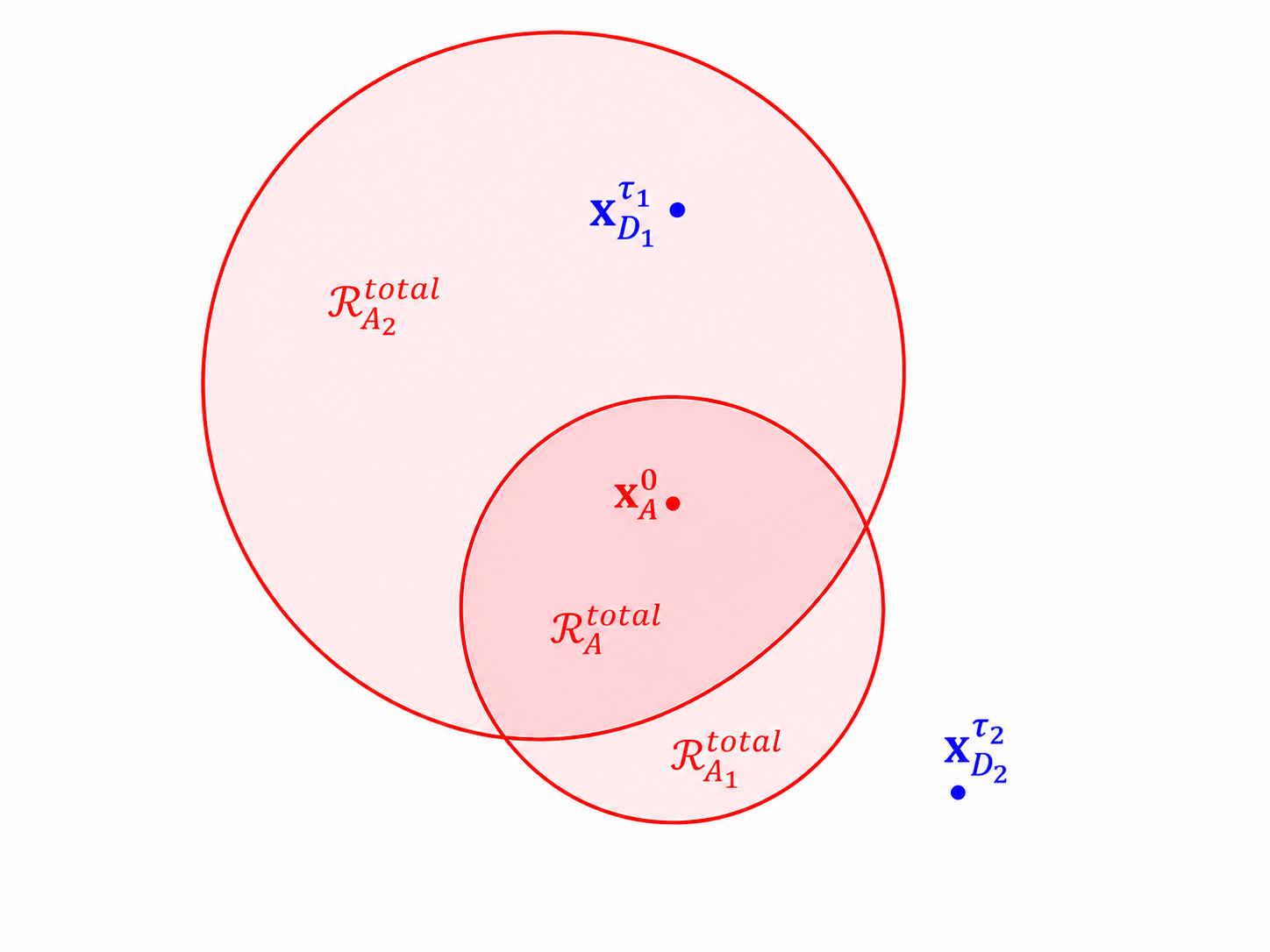}}
\caption{The set $\mathcal R_
{A_1}^{total}$ of points that $A$ can reach no later than $D_1$, the set $\mathcal R_
{A_2}^{total}$ of points that $A$ can reach no later than $D_2$, and the set $\mathcal R_
A^{total}$ of points that $A$ can reach no later than both defenders simultaneously.}
\label{fig2_3}
\end{figure}

We next characterize the attack region of this reach-avoid game, i.e., the set of points that $A$ can reach no later than any defender $\di$. By definition, a point $\mathbf x$ lies in the attack region if $A$ reaches $\mathbf x$ no later than all defenders $\di$, i.e., $\mathbf x\in\mathcal R_{A_i}^{total}$ for each $i =1,2,\dots,N$, where $\mathcal R_{A_i}^{total}$ denotes the attack region of the one-attacker-one-defender reach-avoid game with information delay between $A$ and $\di$, i.e.,
\begin{equation*}
    \begin{aligned}
        \mathcal R_{A_i}^{total}=\{\mathbf x\in\mathbb R^2\,|\,g_i(\mathbf x,\mathbf x_A^0)=\|\mathbf x_{D_i}^{\tau_i}-\mathbf x\|-\frac{\bar v_{D_i}}{\bar v_A}\|\mathbf x_A^0-\mathbf x\|\\+\bar v_{D_i}\tau_i-r_i\ge0\}.
    \end{aligned}
\end{equation*}
Hence, $\mathbf x\in\bigcap_{i=1}^N\mathcal R_{A_i}^{total}$, and the attack region of the one-attacker-multiple-defender reach-avoid game with information delay can be characterized as
\begin{equation*}
    \mathcal R_A^{total}=\bigcap_{i=1}^N\mathcal R_{A_i}^{total},
\end{equation*}
as shown in Fig.~\ref{fig2_3}.

According to~\ref{as4} in Assumption~\ref{as}, we have
\begin{equation*}
    \begin{aligned}
        g_i(\mathbf x_A^0,\mathbf x_A^0)=&\|\mathbf x_{D_i}^{\tau_i}-\mathbf x_A^0\|+\bar v_{D_i}\tau_i-r_i\\=&\|\mathbf x_{D_i}^0-\mathbf x_A^0\|+\bar v_{D_i}\tau_i-r_i\\>&(\bar v_{D_i}+\bar v_A)\tau_i\\>&0.
    \end{aligned}
\end{equation*}
Hence, $\mathbf x_A^0\in\mathcal R_{A_i}^{total}$ for every $i =1,2,\dots,N$, which implies that $\mathbf x_A^0\in\bigcap_{i=1}^N\mathcal R_{A_i}^{total}$. Therefore, the attack region $\mathcal R_A^{total}$ is nonempty.

With the attack region $\mathcal R_A^{total}$ characterized, we can, similarly to the analysis in Subsection~\ref{sub3b},  conclude that the capture point in the one-attacker-multiple-defender reach-avoid game with information delay is given by the first component of the optimal solution of the optimization problem $\mathcal P^N(\mathbf x_A^0,\mathbf x_{D_1}^{\tau_1},\mathbf x_{D_2}^{\tau_2},\dots,\mathbf x_{D_N}^{\tau_N},\tau_1,\tau_2\dots,\tau_N)$:
\begin{equation}
    \begin{aligned}
        \min_{\mathbf x\in\mathbb R^2,\mathbf y\in\mathbb R^2}\quad&\|\mathbf x-\mathbf y\|\\s.t.\quad&g_i(\mathbf x,\mathbf x_A^0)\ge0,\quad \mathrm{for}\ \ i=1,2,\dots,N,\\&f(\mathbf y)\le0,
    \end{aligned}
    \label{eq:optimazationproblemformutiple}
\end{equation}
which is a convex optimization problem. Accordingly, the optimal strategy of the attacker $A$ at time $t$ is 
\begin{equation}
    v_A^*=\bar v_A,\quad\theta_A^*=\arctan\frac{y_A^*-y_A^t}{x_A^*-x_A^t},
    \label{eq:optimalstrategyfora}
\end{equation}
where $\mathbf x_A^*=(x_A^*,y_A^*)$ is the first component of the optimal solution of optimization problem $\mathcal P^N(\mathbf x_A^t,\mathbf x_{D_1}^{p_1},\mathbf x_{D_2}^{p_2},\dots,\mathbf x_{D_N}^{p_N},s_1,s_2\dots,s_N)$, where
\begin{equation*}
    \begin{aligned}
        p_k=\begin{cases}
            0,&t<\tau_k\\t+\tau_k,&t\ge\tau_k
        \end{cases}, s_k=\begin{cases}
            \tau_k-t,&t<\tau_k\\\tau_k,&t\ge\tau_k
        \end{cases},\\k=1,\dots,N.
    \end{aligned}
\end{equation*}
Similarly, the optimal strategy of defender $\di$ at time $t$ is 
\begin{equation}
    v_{D_i}^*=\bar v_{D_i},\quad\theta_{D_i}^*=\arctan\frac{y_{D_i}^*-y_{D_i}^t}{x_{D_i}^*-x_{D_i}^t},
    \label{eq:optimalstrategyford}
\end{equation}
where $\mathbf x_{D_i}^*=(x_{D_i}^*,y_{D_i}^*)$ is the first component of the optimal solution of the optimization problem $\mathcal P^N(\mathbf x_A^{t-\tau_i},\mathbf x_{D_1}^{q_1},\mathbf x_{D_2}^{q_2},\dots,\mathbf x_{D_N}^{q_N},s_1,s_2\dots,s_N)$, with
\begin{equation*}
    q_k=\begin{cases}
        0,&t<\tau_k\\t-\tau_i+\tau_k,&t\ge\tau_k
    \end{cases},\quad k=1,\dots,N.
\end{equation*}
Here, we note that for $t\ge\tau_k$, the strategy of $\di$ needs to predict the position of $D_k$ at time $t-\tau_i+\tau_k$. Similar to the strategy~\eqref{eq:optimalstrategyaftertau} of $A$ in Subsection~\ref{sub3b}, $D_i$ needs to use historical information to compute this prediction step by step starting from $t=t_c$, where $t_c=\max(0,t-\tau_1-\tau_2-\dots-\tau_N)$.

In this way, we obtain the optimal strategies~\eqref{eq:optimalstrategyfora} and~\eqref{eq:optimalstrategyford} for the one-attacker-multiple-defender reach-avoid game with information delay.

    We note that the function $g_i(\mathbf x,\mathbf x_A^0)$ defined in this paper satisfies the definition of a Pursuit Enclosure Function in~\cite[Definition 1]{yan2024multiplayer}. Consequently, our optimization problem~\eqref{eq:optimazationproblemformutiple} belongs to the class of problems  introduced in~\cite[Definition 4]{yan2024multiplayer}. It then follows from~\cite[Theorem 1]{yan2024multiplayer} that the same optimal solution can be obtained using at most two functions $g_i(\mathbf x,\mathbf x_A^0)$. In other words, in the one-attacker-multiple-defender reach-avoid game with information delay, at most two defenders are sufficient to determine the capture point under optimal play, and hence the corresponding optimal strategies of both players.

\subsection{N-vs-M Matching and Assignment}
Next, we consider the reach-avoid game with information delay involving $N$ defenders and $M$ attackers. In this setting, the optimization objective changes: rather than focusing on the minimum distance between an attacker $A_j$ and the target region during the game, the goal is to capture as many attackers as possible before the game ends. We assume that once a defender $\di$ captures an attacker $\aj$, it no longer participates in subsequent captures. Under this assumption, the problem can be formulated as a maximum matching problem. 

In~\cite{yan2019task}, Yan et al. studied the assignment problem for an $N$-versus-$M$ reach-avoid game in which the target region is a half-plane and the defenders have no capture radius. In particular, \cite[Theorem 5]{yan2019task} presents an optimization formulation corresponding to the maximum matching in that setting. In our case, the maximum matching formulation can be expressed in essentially the same form as that in~\cite[Theorem 5]{yan2019task}. The only difference is in the definition of the vector $\mathbf w$ (which corresponds to the vector $\mathbf r$ in~\cite[Theorem 5]{yan2019task}, and is renamed here to avoid confusion with the capture radius used in this paper). Specifically, the entries of $\mathbf w$ are defined as follows. Set $w_i^1(j)=0$ if $\di$ cannot guarantee capture of $\aj$ before it reaches $\mathcal T$; otherwise, set $w_i^1(j)=1$. Similarly, set $w_{i1,i2}^2(j)=0$ if $D_{i1}$ and $D_{i2}$ cannot guarantee capture of $\aj$ before it reaches $\mathcal T$; otherwise, set $w_{i1,i2}^2(j)=1$. We then obtain the prior-information vector $\mathbf w=[\mathbf w_1^1,\dots,\mathbf w_N^1,\mathbf w^2_{1,2},\dots,\mathbf w^2_{1,N},\mathbf w^2_{2,3},\dots,\mathbf w^2_{N-1,N}]$, where $\mathbf w_i^1=[\mathbf w_i^1(1),\dots,\mathbf w_i^1(M)]$ and $\mathbf w_{i1,i2}^2=[\mathbf w_{i1,i2}^2(1),\dots,\mathbf w_{i1,i2}^2(M)]$. In this way, we provide a solution method for the maximum matching problem in the reach-avoid game with information delay involving $N$ defenders and $M$ attackers, which is essentially consistent with the formulation in~\cite[Theorem 5]{yan2019task}.

\section{Numerical Simulations}\label{simu}
In this section, we present simulation results for both the one-on-one and the multi-agent settings to demonstrate the effectiveness of the proposed strategies. All simulations are produced using MATLAB R2023b. The hardware configuration is as follows: CPU: 13th Gen Intel® Core™ i9-13980HX @ 2.20 GHz, Memory: 16.0 GB RAM.

\subsection{One-vs-One Case}
\begin{figure*}[!t]
\centering
\subfigure[$A$ and $D$ both use the optimal strategies.]{\label{fig1a}\includegraphics[width=2.3in]{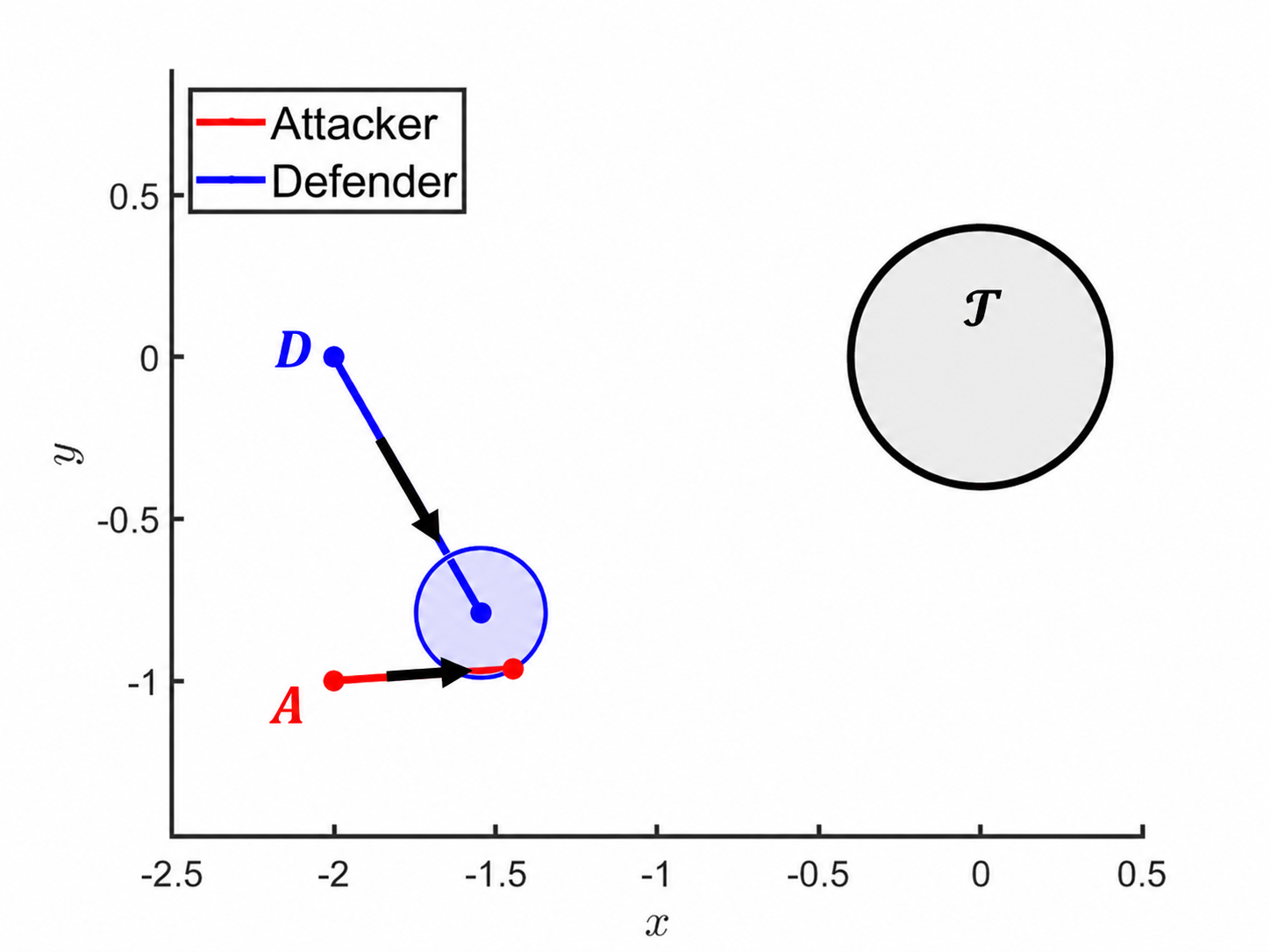}}
\subfigure[$A$ uses the pure-attack strategy while $D$ uses the optimal strategy.]{\label{fig1b}\includegraphics[width=2.3in]{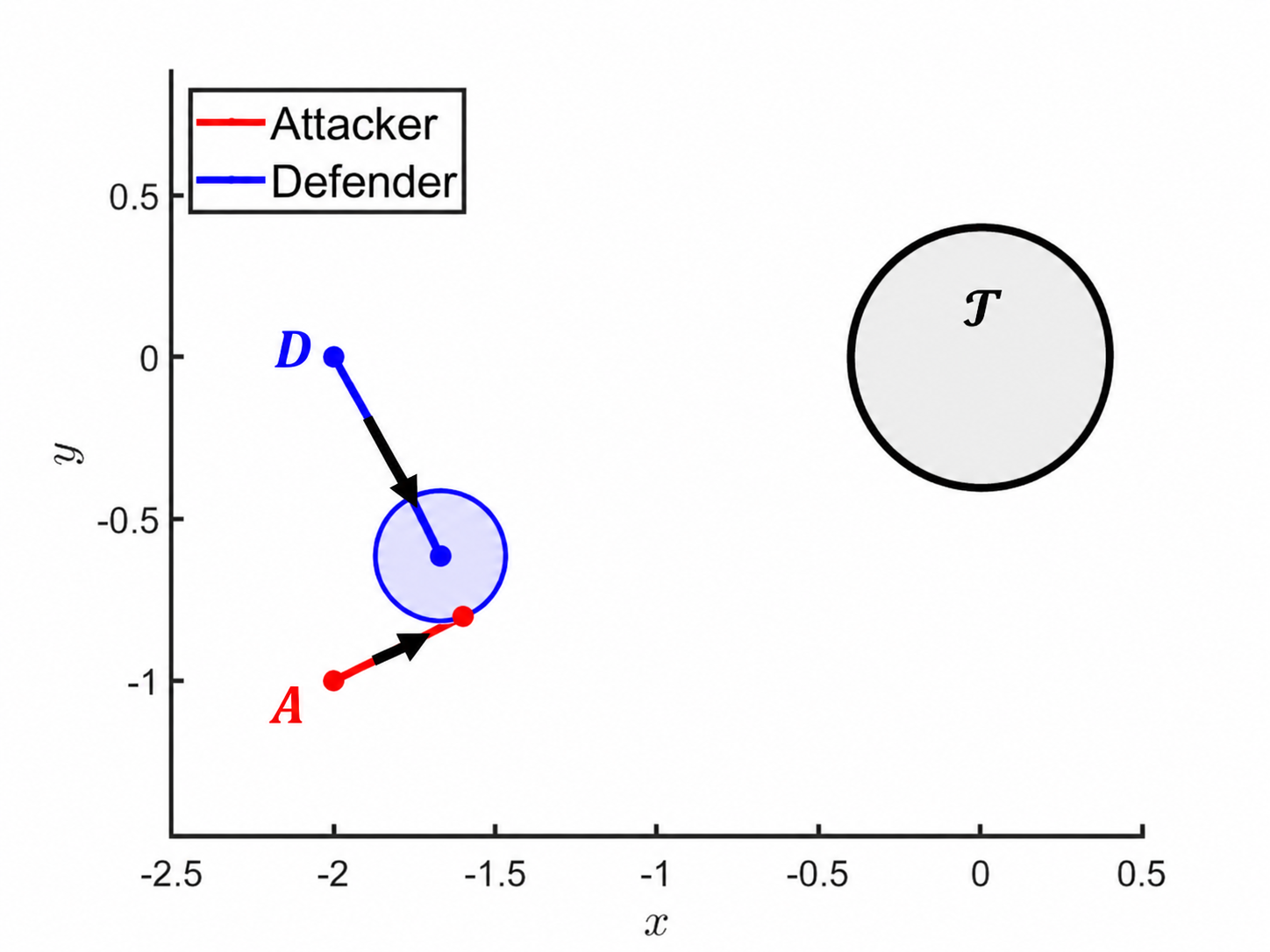}}
\subfigure[$A$ uses the optimal strategy while $D$ uses the pure-defense strategy.]{\label{fig1c}\includegraphics[width=2.3in]{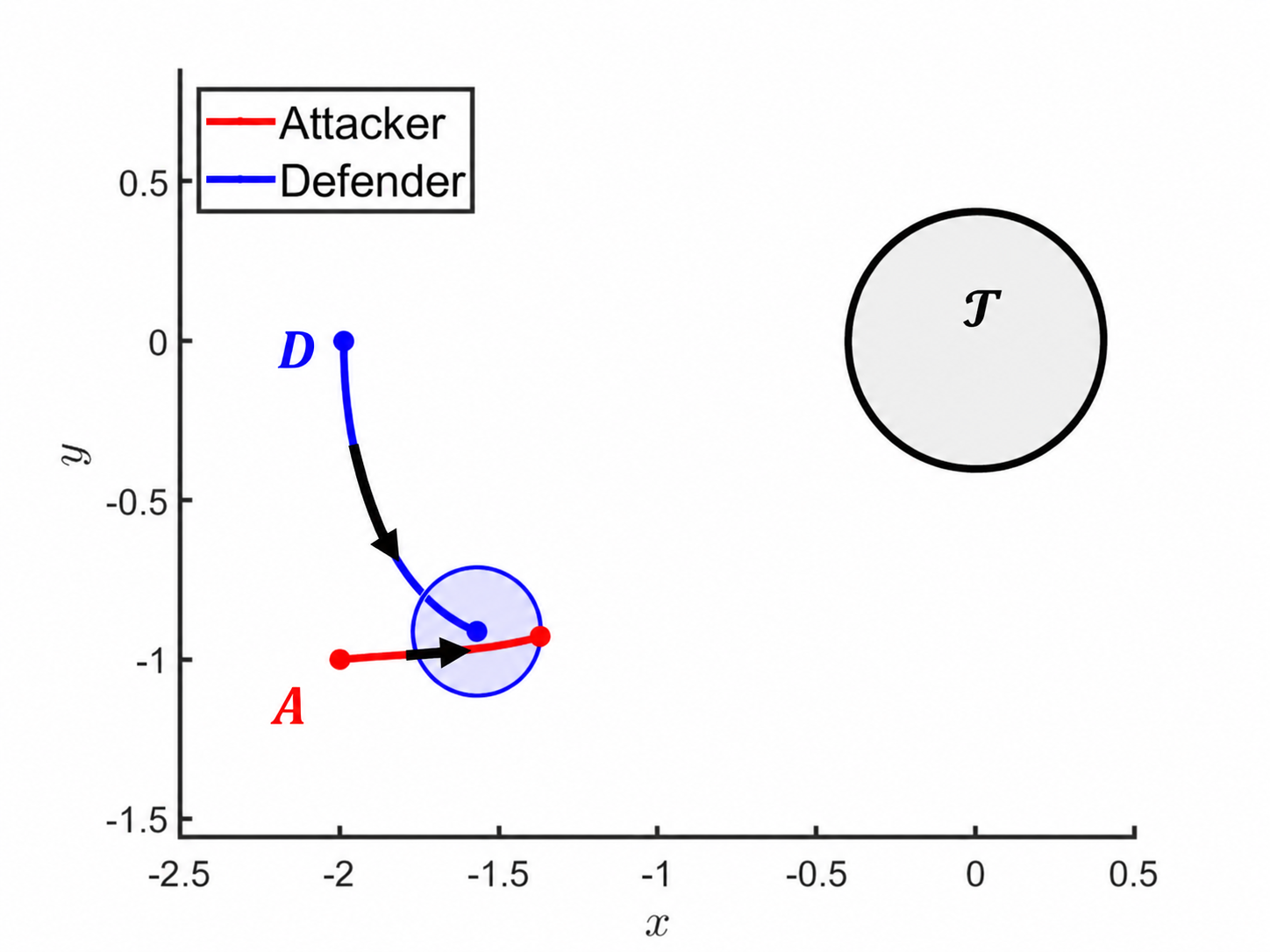}}
\caption{Simulation of the one-attacker-one-defender reach-avoid game with information delay. The red trajectory represents the attacker’s motion, the blue trajectory represents the defender’s motion, the blue circle denotes the defender’s capture region, and the black region denotes the target region.}
\label{fig1}
\end{figure*}
\begin{figure}[!t]
\centerline{\includegraphics[width=\columnwidth]{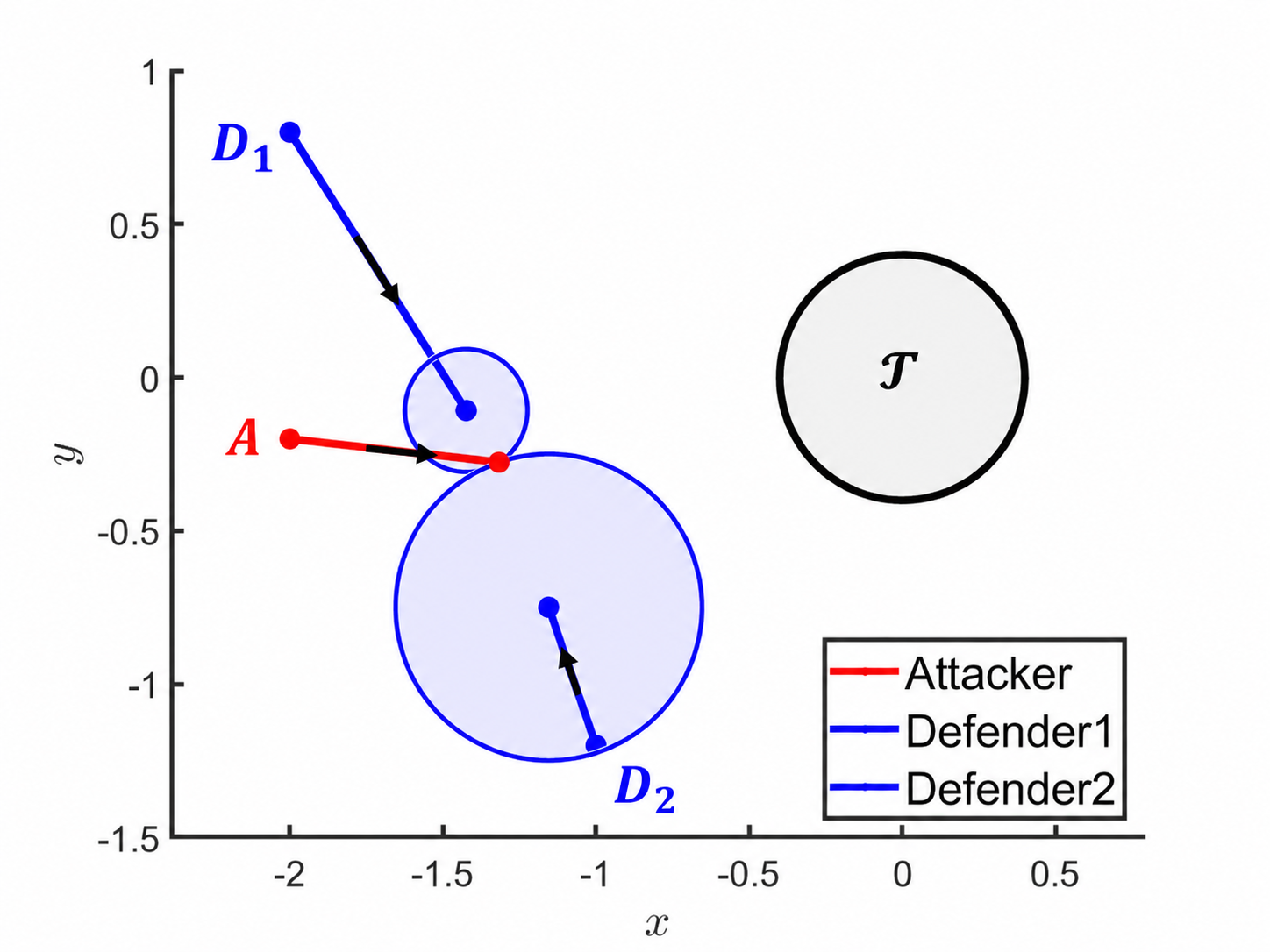}}
\caption{Simulation of the one-attacker-two-defender reach-avoid game with information delay. $A$ and the defenders use the optimal strategies. The red trajectory represents the attacker’s motion, the blue trajectories represent the defenders' motions, the blue circles denote the defenders' capture regions, and the black region denotes the target region.}
\label{fig2}
\end{figure}
\begin{figure}[!t]
\centerline{\includegraphics[width=\columnwidth]{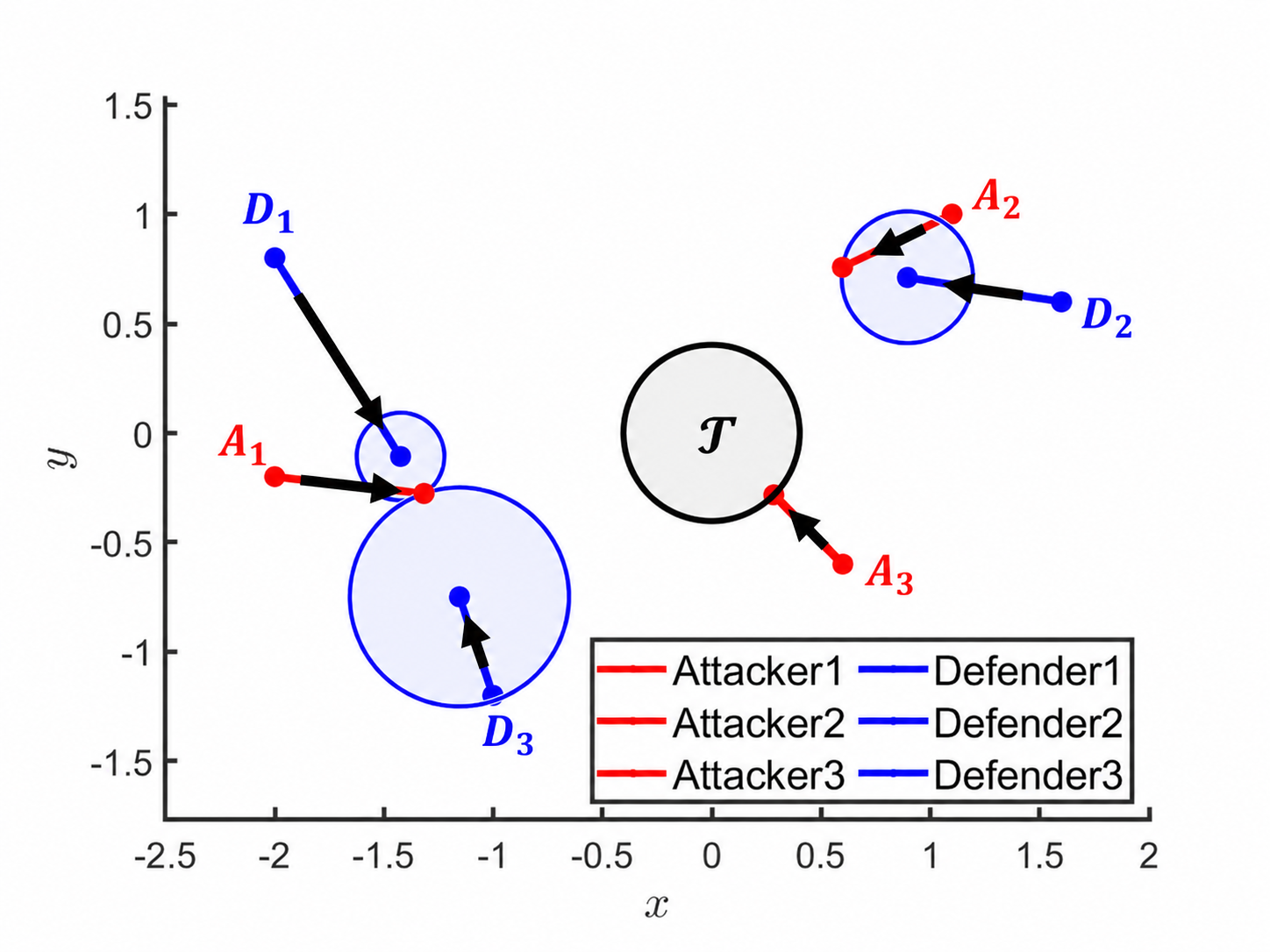}}
\caption{Simulation of the three-attacker-three-defender reach-avoid game with information delay. The attackers and defenders use the optimal strategies. The red trajectories represent the attackers' motions, the blue trajectories represent the defenders' motions, the blue circles denote the defenders' capture regions, and the black region denotes the target region.}
\label{fig3}
\end{figure}
First, we present simulation results for strategies ~\eqref{eq:optimalstrategyaftertau} and~\eqref{eq:optimalstrategybeforetau} in the one-attacker-one-defender reach-avoid game with information delay. In this simulation, the initial state is $\mathbf x_0=(-2,0,-2,-1)^\top$, with parameters $\bar v_D=1$, $\bar v_A=0.5$, $\tau=0.2$, $r=0.2$, $f(\mathbf x)=\|\mathbf x\|-0.4$. The corresponding trajectory is shown in Fig.~\ref{fig1a}, and the resulting cost is $J^*=1.3370$. To further demonstrate the effectiveness of these strategies, we compare them with two benchmark policies: a pure-attack strategy and a pure-defense strategy. In Fig.~\ref{fig1b}, the attacker adopts the pure-attack strategy, always moving toward the point in the target region closest to its current position. This yields a cost of $J=1.3900>J^*$, showing that unilateral deviation by the attacker leads to a worse outcome. In Fig.~\ref{fig1c}, the defender adopts the pure-defense strategy, always moving toward the attacker’s currently observed position. This yields a cost of $J=1.2568<J^*$, showing that unilateral deviation by the defender also results in a worse outcome for itself. These comparative results confirm the superiority of strategies~\eqref{eq:optimalstrategyaftertau} and~\eqref{eq:optimalstrategybeforetau}.

\subsection{Multi-Agent Case}
Next, we present the simulation results for strategies~\eqref{eq:optimalstrategyfora} and~\eqref{eq:optimalstrategyford} in the one-attacker-multiple-defender reach-avoid game with information delay. We consider a scenario involving two defenders, $D_1$ and $D_2$, and one attacker $A$. The initial positions are given by $\mathbf x_{D_1}^0=(-2,0.8)$, $\mathbf x_{D_2}^0=(-1,-1.2)$, and $\mathbf x_A^0=(-2,-0.2)$. Moreover, $\bar v_{D_1}=\bar v_{D_2}=1$, $\bar v_A=0.5$, $\tau_1=0.3$, $\tau_2=0.9$, $r_1=0.2$, $r_2=0.5$, $f(\mathbf x)=\|\mathbf x\|-0.4$. The simulation results are shown in Fig.~\ref{fig2}, with the corresponding cost value $J^*=0.9454$. For comparison, we also simulate the one-attacker-one-defender reach-avoid game with information delay under the same settings. The results show that the attack-defense cost between $A$ and $D_1$ is $J_1=0.9363<J^*$, and that between $A$ and $D_2$ is $J_2=0.7999<J^*$. These results indicate that,  in some scenarios, cooperation among multiple defenders can achieve better performance than a single defender acting alone.

Finally, we present simulation results for the $N$-$M$ matching problem. In this experiment, we consider three defenders $D_1$, $D_2$, $D_3$ and three attackers $A_1$, $A_2$, $A_3$. The initial positions of $D_1$, $D_2$ and $D_3$ are $(-2,0.8)$, $(1.6,0.6)$, and $(-1,-1.2)$, respectively, while those of $A_1$, $A_2$ and $A_3$ are $(-2,-0.2)$, $(1.1,1)$, and $(0.6,-0.6)$, respectively. Moreover, $\bar v_{D_1}=\bar v_{D_2}=\bar v_{D_3}=1$, $\bar v_{A_1}=\bar v_{A_2}=\bar v_{A_3}=0.5$, $\tau_1=0.3$, $\tau_2=0.4$, $\tau_3=0.9$, $r_1=0.2$, $r_2=0.3$, $r_3=0.5$, $f(\mathbf x)=\|\mathbf x\|-0.4$. The simulation results are shown in Fig.~\ref{fig3}. In this scenario, defenders $D_1$ and $D_3$ cooperate to capture attacker $A_1$, defender $D_2$ captures attacker $A_2$ on its own, and attacker $A_3$, for which no defender can be assigned, succeeds in entering the target region.

\section{Conclusion}\label{cc}
This paper studied multiplayer reach-avoid differential games with defender-side information delay. For the one-attacker-one-defender case, we derived an explicit analytical characterization of the delayed attack region and proved its convexity, which enables efficient determination of the winning condition. In the case where the defender can guarantee capture, we formulated a convex optimization problem to compute the capture point and obtained equilibrium strategies in the sense of a subgame-perfect Nash equilibrium. We further extended the framework to the one-attacker-multiple-defender case and the general multiplayer setting, where delay-aware winning relations were incorporated into a maximum matching formulation for defender-attacker assignment. Numerical simulations in one-vs-one, one-vs-two, and multi-agent cases validated the theoretical results and demonstrated the impact of optimal and non-optimal strategies under information delay. Future research  will focus on extending the framework to three-dimensional environments, more general dynamics, and distributed implementations with heterogeneous delays.

\section*{References}

\bibliographystyle{unsrt}
\bibliography{refs}

\end{document}